\DeclareMathOperator{\tr}{Tr}
\newtheorem{lemma}{Lemma}
\newtheorem{prop}{Proposition}
\newtheorem{theorem}{Theorem}
\begin{document}

\newcommand{\ket}[1]{|{#1}\rangle}
\newcommand{\bra}[1]{\langle{#1}|}
\newcommand{\Hil}{\mathcal{H}}

\title{Entanglement vs.\ gap for one-dimensional spin systems}
\author{Daniel Gottesman$^1$\thanks{E-mail: dgottesman@perimeterinstitute.ca}, and M.~B.~Hastings$^{2,3}$\thanks{E-mail: xhastings@gmail.com}\\
\textit{$^1$Perimeter Institute for Theoretical Physics, Waterloo, Ontario, Canada}
\\
\textit{$^2$Center for Nonlinear Studies and Theoretical Division,}
\\
\textit{Los Alamos National Laboratory, Los Alamos, NM,
USA}
\\
\textit{$^3$ Microsoft Research, Station Q, University of California, Santa Barbara, CA, 93106}
}

\date{}

\maketitle

\abstract{We study the relationship between entanglement and spectral
gap for local Hamiltonians in one dimension.
The area law for a one-dimensional system states that for the ground state, the entanglement of any interval is upper-bounded by a constant independent of the size of the interval.  However, the possible dependence of the upper bound on the spectral gap $\Delta$ is not known,
as the best known general upper bound is asymptotically much
larger than the largest possible entropy of any
model system previously constructed for small $\Delta$.
To help resolve this asymptotic behavior, we construct a family of one-dimensional local systems for which some intervals have entanglement entropy which is polynomial in $1/\Delta$, whereas previously studied systems, such
as free fermion systems or systems described by conformal field theory,
had the entropy of all intervals bounded by a constant times $\log(1/\Delta)$.}

\section{Introduction}

In many local quantum systems, the long-distance behavior of correlation
functions
is a good diagnostic of the critical properties of a system near a phase transition
as the excitation gap $\Delta$ tends to zero.
Entanglement provides a different diagnostic, which
gives complementary information about critical behavior, and is important for our ability
to simulate these systems on classical computers, as described later.

To quantify entanglement,
we define the entanglement entropy of a region $A$ with the complementary region $B$ (assuming the global state on $A \cup B$ is pure) as the von Neumann entropy $S(\rho_A)$, with $\rho_A$ the density matrix of region $A$ (i.e., the global state traced over region $B$).
A state of a spin system that obeys an ``area law'' has the property that the entropy of a region (in one or
more dimensions) is bounded by the surface area of the region instead of the usual maximum of the volume of the
region.  (See~\cite{arealaw} for a review of area laws and references to a full cross-section of work in that area.)
Area laws are conjectured to hold for the ground state of any gapped Hamiltonian spin system with
short-range interactions in finite dimensions, but have not been proved in general.
For one-dimensional systems, the area law simply states that the
entropy of an interval along the line is bounded by a constant,
no matter how large the interval.

At a critical point, it is possible that the entanglement, like the correlation length, can diverge.
However, if we are merely \emph{near} a critical point (either because of additional Hamiltonian
terms or because we are considering a finite-size system), the gap $\Delta$ becomes finite, and
in one dimension, an area law must hold~\cite{Hastings-area-law}.   The entropy can thus provide
another way to identify a critical point.  Note that
diverging entanglement and diverging correlation length do not
necessarily occur together;
on the one hand, there exist states with all correlation
functions short-range but with arbitrarily large entanglement~\cite{expanders,expanders2}, while on the other hand there are
critical points in two dimensions where the correlation functions become long-range but the
area law is still obeyed~\cite{2dboson}.

In one dimension, the entanglement entropy has important implications for our ability to simulate quantum systems
on a classical computer.  Intuitively, so long as the entanglement entropy remains small, the system is ``classical" in some
sense.  More precisely, we know that whenever the system
obeys an area law for the R\'{e}nyi entropy, $S_{\alpha}\equiv (1-\alpha)^{-1}\ln({\rm tr}(\rho_A^{\alpha}))$,
for some $\alpha<1$ (which implies an area law for the von Neumann entropy), there exists a matrix product state approximation to the ground state
with an error inverse polynomial in the bond dimension~\cite{renyiapprox}.  This approximation lies behind the powerful density-matrix
renormalization group algorithm~\cite{dmrg}, as well as behind Vidal's algorithm for time evolution~\cite{tebd}.

Hastings~\cite{Hastings-area-law} proved that a general one-dimensional gapped Hamiltonian spin system satisfies an area law.  However, the bound is probably not tight:
\begin{equation}
S({\rm region}) \leq S_{\rm max} = c_0 \xi' \ln \xi' \ln D \, 2^{\xi' \ln D},
\label{eq:oneDbound}
\end{equation}
where $c_0$ is a constant of order $1$, $D$ is the Hilbert space dimension of a single spin in the system, and
\begin{equation}
\xi' = 6 \max (2v/\Delta, \xi_c),
\end{equation}
where $v$ and $\xi_c$ are parameters from the Lieb-Robinson bound~\cite{LRbound,locality,lr2,lr3}, and $\Delta$ is the spectral gap of the Hamiltonian.
The quantity $v$ is the Lieb-Robinson velocity, and is roughly equal to the size of the energy scale of the largest individual terms in the Hamiltonian times the interaction range.
The Lieb-Robinson velocity sets an upper bound for the group velocity of excitations, so that for a local operator $O$, the
operator $O(t)=\exp(iHt)O\exp(-iHt)$ can be approximately described by an operator supported on the set of sites within
distance $vt$ of the support of $O$.
The quantity $\xi_c$ is a length of order the interaction range which
determines the distance which excitations can ``leak" outside the light cone of the Lieb-Robinson bound.
In particular, all of these constants can be taken to be order $1$, except for $\Delta$, which could be very small even with all
other parameters constant in size; for example, by tuning a system through a quantum critical point the quantity $\Delta$ can be
made small.
Thus, in the bound~(\ref{eq:oneDbound}), the entropy is limited by $\exp [O(1/\Delta)]$.

In contrast, physical intuition suggests that entanglement should be able to reach a distance roughly equal to the correlation length, i.e., about $1/\Delta$.  However, most one-dimensional systems considered by physicists in fact have an even smaller amount of entanglement.  Near a conformal critical point, the bound in the area law diverges as only $\log(1/\Delta)$~\cite{cftent,cftent2}.  Indeed, as far as we aware, for \emph{all} one-dimensional systems where the area law has previously been calculated (including those near non-conformal critical points), the entropy of an interval is also bounded by about $\log (1/\Delta)$, which is below the bound in eq.~(\ref{eq:oneDbound}) by not one but {\em two} exponentials!

Therefore, we raise the following as an open question: given a one-dimensional system with
interaction range and interaction strength both $O(1)$, how big can the entanglement be for a given spectral
gap?  Someone familiar with the previous work on the subject might well conjecture a bound of $\log(1/\Delta)$.
In this paper we falsify that conjecture and get closer to the naive intuitive bound of $1/\Delta$ by demonstrating a family of systems where the entropy of some regions is ${\rm poly} (1/\Delta)$.  In particular, we construct systems on a pair of rings coupled along a junction, each with $N$ $3$-state spins.  In the ground state, the two rings are highly entangled with each other, and the entropy for one ring is roughly $N$.  The spectral gap $\Delta$ is roughly $1/N^4 \log N$, so these systems have entropy for some regions equal to $(-\Delta \log \Delta)^{-1/4}$.  The Hamiltonians we
present depend on $N$ in two ways: First, they are associated to the ring structure, and the size of the rings changes with $N$, and second, we have some terms in the Hamiltonian which are polynomially small in $N$.  Since there are also Hamiltonian terms of size $O(1)$, the presence of some small terms in the Hamiltonian does not affect the bound~(\ref{eq:oneDbound}).  In sec.~\ref{sec:infinite}, we discuss how to let the number of the particles in the system be much larger than $N$, so we can consider $N$ to be a separate parameter from the size of the system.

We have not succeeded in
obtaining the naive estimate of $1/\Delta$ and we are very far from $\exp[O(1/\Delta)]$, so at present neither of the
authors has any firm intuition for a tight upper bound.  Given progress in recent years in understanding the
relation between correlations and spectral gap~\cite{locality}, hopefully this question will be answered too.
Intuitively, an area law is fairly simple: in a gapped system, the entanglement spreads only
over a finite length and therefore only the spins near the boundary of $A$  are
entangled with those outside of $A$.  However, in general any spin within $A$ will be entangled with
other spins within $A$ as well as spins in the complement of $A$.  Thus, to turn this intuition into practice
seems quite tricky as it leads us into the difficult question of multi-party entanglement.

The intuition for the construction used in this paper
is somewhat similar to that used for proofs of QMA-completeness and universality of adiabatic quantum computation~\cite{QMA1D}, although the details are rather different.  In particular, we imagine a computational process that generates entanglement among the spins and write down a Hamiltonian whose ground state is a superposition of time steps of the process.

To be more precise, we imagine the spins have the three states $\ket{0}$, $\ket{1}$, and $\ket{x}$.  $\ket{0}$ and $\ket{1}$ are two states of a qubit, and most of the spins in each ring will be in this qubit subspace, but one spin in each ring will be in the $\ket{x}$ state.  We can imagine the $\ket{x}$ as a ``hole'' in a ring otherwise filled with spin-$1/2$ particles.  The rings will meet in two sites, which we number $1$ and $N$ on each ring, with sites $2, \ldots, N-1$ on each ring being connected only to points in order on the same ring.  When there are two qubit states at the two number-$1$ sites, we project onto a maximally entangled state such as the singlet $\ket{0}_L \ket{1}_R - \ket{1}_L \ket{0}_R$.  We then move each $\ket{x}$ state around its ring, shifting the qubit states by one place, allowing us to entangle another pair.  The steady-state behavior of this system is thus $N-1$ maximally entangled pairs, aligned roughly between corresponding sites on the two rings, perhaps offset by one place, depending on the locations of the $\ket{x}$ states.

We then write down a nearest-neighbor Hamiltonian whose ground state is the superposition of such states over all possible locations of the two $\ket{x}$ states.  In the ground state, the entropy of one ring taken alone will be slightly over $N-1$, since we broken all $N-1$ EPR pairs, but we will show that the spectral gap of the Hamiltonian is $\Delta = \Omega(1/N^4 \log N)$.  Thus, the entropy of just the left ring, say, is $\Omega((-\Delta \log \Delta)^{-1/4})$, a polynomial in $1/\Delta$.

The ground state differs from those used in the previous QMA-completeness constructions in that there is an $\ket{x}$ state in each ring, and we take the superposition over possible positions of the {\em pair}, whereas in the previous constructions, the ground state was a superposition over time slices.  Essentially, the previous Hamiltonians were quantum walks on a line, whereas ours is a walk on a two-dimensional graph.

While the Hamiltonian we write requires fine-tuning, it is not too unnatural from the physics point of view.  Regarding
the state $\ket{0}$ as representing an electron with spin up, the state $\ket{1}$ as an electron with spin down, and
$\ket{x}$ as a hole, within the rings our Hamiltonian is simply the $t-J$ model Hamiltonian of condensed matter physics with $t\neq 0$ and $J=0$.  Only
where the rings meet do we have a more complicated interaction.

Sandy Irani has shown a similar result independently~\cite{irani}.
The two constructions have slightly different motivations and each
has different advantages.
We were primarily interested in the dependence of entanglement
entropy on gap, and we
have a more rapid divergence of entanglement entropy with inverse
gap and use fewer states per site.  Irani's primary motivation was to
study the effect of
translational invariance on entanglement entropy.  Her construction
shows that it is possible to achieve a polynomial dependence of
entanglement entropy on gap with a translation-invariant Hamiltonian, whereas our construction explicitly breaks
translation invariance.

\section{From Two Rings To An Infinite Line}
\label{sec:infinite}

As mentioned above, our main concern in this paper is to construct a local Hamiltonian $H$ for a system of $2N$ $3$-state spins.  It will consist of two rings containing $N$ sites each which are connected at two adjacent points, and we will show that for the ground state of $H$, the entanglement of the left ring with the right ring is $\Omega((-\Delta \log \Delta)^{-1/4})$ (where $\Delta$ is the spectral gap of $H$).  It may not be immediately obvious how this relates to the area law of an infinite one-dimensional system, so before we discuss the technical details of the construction, let us first explain how the result immediately carries over to a system of spins on a line with the property that the entanglement of some regions is polynomial in the inverse gap.

First, note that we can squash the two rings down into a one-dimensional ``brick'' composed of a total of $N$ $9$-state spins, as illustrated in figure~\ref{fig:brick}.
\begin{figure}
\begin{center}
\begin{picture}(300,60)

\put(10,45){\makebox(20,10){a)}}
\put(50,30){\circle{40}}
\put(100,30){\circle{40}}

\put(69,36){\circle*{5}}
\put(62,46){\circle*{5}}
\put(50,50){\circle*{5}}
\put(38,46){\circle*{5}}
\put(31,36){\circle*{5}}
\put(31,24){\circle*{5}}
\put(38,14){\circle*{5}}
\put(50,10){\circle*{5}}
\put(62,14){\circle*{5}}
\put(69,24){\circle*{5}}

\put(62,34){\makebox(0,0){1}}
\put(55,43){\makebox(0,0){2}}
\put(45,40){\makebox(0,0){\ldots}}
\put(62,24){\makebox(0,0){N}}

\put(50,30){\oval(12,12)[tr]}
\put(50,30){\oval(12,12)[tl]}
\put(50,30){\oval(12,12)[bl]}
\put(50,24){\vector(1,0){5}}

\put(119,36){\circle*{5}}
\put(112,46){\circle*{5}}
\put(100,50){\circle*{5}}
\put(88,46){\circle*{5}}
\put(81,36){\circle*{5}}
\put(81,24){\circle*{5}}
\put(88,14){\circle*{5}}
\put(100,10){\circle*{5}}
\put(112,14){\circle*{5}}
\put(119,24){\circle*{5}}

\put(88,34){\makebox(0,0){1}}
\put(95,43){\makebox(0,0){2}}
\put(105,40){\makebox(0,0){\ldots}}
\put(88,24){\makebox(0,0){N}}

\put(100,30){\oval(12,12)[tl]}
\put(100,30){\oval(12,12)[tr]}
\put(100,30){\oval(12,12)[br]}
\put(100,24){\vector(-1,0){5}}

\put(160,45){\makebox(20,10){b)}}
\put(180,24){\framebox(48,12){}}
\put(238,24){\framebox(48,12){}}

\put(228,36){\circle*{5}}
\put(216,36){\circle*{5}}
\put(204,36){\circle*{5}}
\put(192,36){\circle*{5}}
\put(180,36){\circle*{5}}
\put(180,24){\circle*{5}}
\put(192,24){\circle*{5}}
\put(204,24){\circle*{5}}
\put(216,24){\circle*{5}}
\put(228,24){\circle*{5}}

\put(228,46){\makebox(0,0){1}}
\put(216,46){\makebox(0,0){2}}
\put(204,43){\makebox(0,0){\ldots}}
\put(228,14){\makebox(0,0){N}}

\put(286,36){\circle*{5}}
\put(274,36){\circle*{5}}
\put(262,36){\circle*{5}}
\put(250,36){\circle*{5}}
\put(238,36){\circle*{5}}
\put(238,24){\circle*{5}}
\put(250,24){\circle*{5}}
\put(262,24){\circle*{5}}
\put(274,24){\circle*{5}}
\put(286,24){\circle*{5}}

\put(238,46){\makebox(0,0){1}}
\put(250,46){\makebox(0,0){2}}
\put(262,43){\makebox(0,0){\ldots}}
\put(238,14){\makebox(0,0){N}}

\put(174,18){\dashbox{3}(12,24){}}

\end{picture}
\end{center}
\caption{a) The two rings.  Each black dot represents a $3$-state spin.  b) The squashed rings.  Now, vertical pairs of black dots represent a single $9$-state spin.}
\label{fig:brick}
\end{figure}
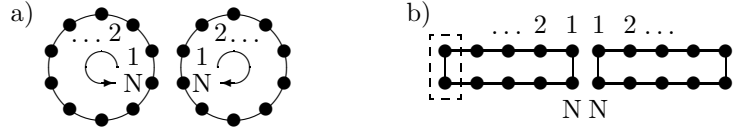
Note that $3$-state spins which are neighboring in the rings get mapped to neighboring $9$-state spins in the brick, so the resulting Hamiltonian is still a nearest-neighbor Hamiltonian.  While at the place where the two rings meet, the ring Hamiltonian has a four-spin term (see the following section), all four of those spins are contained in just two $9$-state spins, so the brick Hamiltonian contains only two-spin terms.  Note that it is important that we squash the rings this way and not try to connect the rings along their length, as we are interested in studying the area law based on the entropy of a single ring by itself, and we therefore want the surface area of a single ring to be small.  In the brick, the surface area of what used to be a ring is simply $2$ spins: the $9$-state spin consisting of the original sites $1$ and $N$, and the $9$-state spin consisting of sites $N/2$ and $N/2+1$.  If we had joined the rings length-wise, the surface area would be $N$, the size of a ring, which would be no good.

Now we have a finite truly one-dimensional system whose ground state has the property that the left half of the brick has high entanglement with the right half.  We can take an infinite line of $9$-state particles and chop it up into identical bricks of this sort (figure~\ref{fig:line}).
\begin{figure}
\begin{center}
\begin{picture}(400,60)

\multiput(30,24)(58,0){6}{\framebox(48,12){}}

\multiput(78,36)(58,0){6}{\circle*{5}}
\multiput(66,36)(58,0){6}{\circle*{5}}
\multiput(54,36)(58,0){6}{\circle*{5}}
\multiput(42,36)(58,0){6}{\circle*{5}}
\multiput(30,36)(58,0){6}{\circle*{5}}
\multiput(30,24)(58,0){6}{\circle*{5}}
\multiput(42,24)(58,0){6}{\circle*{5}}
\multiput(54,24)(58,0){6}{\circle*{5}}
\multiput(66,24)(58,0){6}{\circle*{5}}
\multiput(78,24)(58,0){6}{\circle*{5}}

\multiput(26,20)(116,0){3}{\framebox(114,20){}}

\put(6,24){\makebox(20,12){$\cdots$}}
\put(372,24){\makebox(20,12){$\cdots$}}

\end{picture}
\end{center}
\caption{Part of an infinite system of bricks.}
\label{fig:line}
\end{figure}
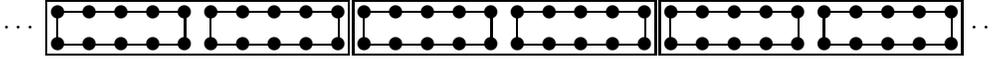
If we take a Hamiltonian which is simply the sum of the brick Hamiltonians on each piece, with no terms interacting distinct pieces, the ground state of the resulting Hamiltonian is simply the tensor product of the ground states of the individual bricks.  There are many excited states, but the lowest-lying excited states are superpositions of states for which all but one brick is in the ground state, and the remaining brick is in its first excited state.  The gap of the line Hamiltonian is thus just equal to $\Delta$, the gap of the original ring Hamiltonian (i.e., $\Omega(1/N^4 \log N)$).  Let us examine an interval $A$ which begins at the center of one brick and ends at the center of another brick.  Bricks which are completely contained in $A$ do not contribute to the entropy, but the two bricks which are cut in half each contribute about $N$ units of entropy, so the entropy of $A$ in the ground state is roughly $2N$.  In particular, the entropy of $A$ is $\Omega((\Delta \log \Delta)^{-1/4})$.

Of course, some regions have very low entropy.  Indeed, an interval containing only complete bricks will have entropy $0$.  This does not affect the area law for the system, which simply provides an upper bound on the entanglement of any region.  Some regions have high entropy whereas others do not. Nevertheless, if one finds this result unsatisfying, one may always consider a larger system constructed by two adjacent lines of $9$-state particles, each constructed as above, but offset by half a brick (as in figure~\ref{fig:overlapping}).
\begin{figure}
\begin{center}
\begin{picture}(400,60)

\multiput(30,10)(58,0){6}{\framebox(48,12){}}

\multiput(78,22)(58,0){6}{\circle*{5}}
\multiput(66,22)(58,0){6}{\circle*{5}}
\multiput(54,22)(58,0){6}{\circle*{5}}
\multiput(42,22)(58,0){6}{\circle*{5}}
\multiput(30,22)(58,0){6}{\circle*{5}}
\multiput(30,10)(58,0){6}{\circle*{5}}
\multiput(42,10)(58,0){6}{\circle*{5}}
\multiput(54,10)(58,0){6}{\circle*{5}}
\multiput(66,10)(58,0){6}{\circle*{5}}
\multiput(78,10)(58,0){6}{\circle*{5}}

\multiput(26,6)(116,0){3}{\framebox(114,20){}}

\put(6,10){\makebox(20,12){$\cdots$}}
\put(372,10){\makebox(20,12){$\cdots$}}

\multiput(30,35)(58,0){6}{\framebox(48,12){}}

\multiput(78,47)(58,0){6}{\circle*{5}}
\multiput(66,47)(58,0){6}{\circle*{5}}
\multiput(54,47)(58,0){6}{\circle*{5}}
\multiput(42,47)(58,0){6}{\circle*{5}}
\multiput(30,47)(58,0){6}{\circle*{5}}
\multiput(30,35)(58,0){6}{\circle*{5}}
\multiput(42,35)(58,0){6}{\circle*{5}}
\multiput(54,35)(58,0){6}{\circle*{5}}
\multiput(66,35)(58,0){6}{\circle*{5}}
\multiput(78,35)(58,0){6}{\circle*{5}}

\multiput(84,31)(116,0){2}{\framebox(114,20){}}
\put(21,31){\line(1,0){61}}
\put(82,31){\line(0,1){20}}
\put(82,51){\line(-1,0){61}}
\put(377,31){\line(-1,0){61}}
\put(316,31){\line(0,1){20}}
\put(316,51){\line(1,0){61}}

\put(6,35){\makebox(20,12){$\cdots$}}
\put(372,35){\makebox(20,12){$\cdots$}}

\end{picture}
\end{center}
\caption{Two layers of offset bricks.}
\label{fig:overlapping}
\end{figure}
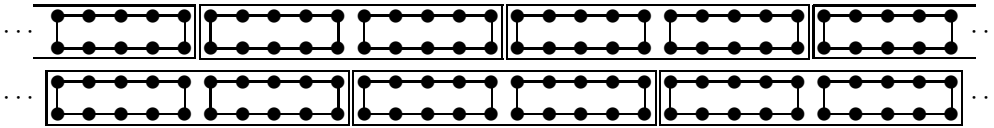
Now any interval will cut off on each end at least one-quarter of a brick on one of the two lines.  We can combine the two lines into a single line of $81$-state particles, and the resulting Hamiltonian has the property that for its ground state, any interval of at least $N$ spins will have entropy $\Omega((- \Delta \log \Delta)^{-1/4})$.

\section{The Two-Ring Hamiltonian}

We will write our Hamiltonian $H$ as a sum of various terms:
\begin{equation}
H = H_L + H_R + H_B + H_P + H_V.
\end{equation}
The terms $H_L$ and $H_R$ cause the spins (containing a qubit state) to hop around the left and right rings, respectively.  $H_B$ handles the hopping of the $\ket{x}$ ``holes'' between the sites numbered $1$ and $N$ on each ring.  In order to keep the two rings in sync, we require that the holes hop in unison when they hop between sites $1$ and $N$, and $H_B$ ensures that this happens.  $H_P$ is the projection Hamiltonian, which is responsible for ensuring that any pair of qubits at site $1$ on the left and right rings are in fact in the singlet state.  $H_x$ is a self-energy term for the $\ket{x}$ hole states, ensuring that the ground state has exactly one $\ket{x}$ in each ring.

More specifically, let
\begin{equation}
S_i^{L/R} = \sum_{a,b} \ket{ab}_{i,i+1}^{L/R} \bra{ba}.
\end{equation}
We work modulo $N$, so site $N+1$ should be considered the same as site $1$.  $S_i^{L/R}$ swaps the particles in sites $i$ and $i+1$ for the left (L) or right (R) ring, regardless of the values of the spins at those sites.  Thus we can define
\begin{equation}
F_i^{L/R} = S_i^{L/R} \ket{x}_i^{L/R} \bra{x},
\end{equation}
which swaps an $\ket{x}$ at site $i$ with a state at site $i+1$, and annihilates any state which does not have an $\ket{x}$ at site $i$.

We then let
\begin{equation}
H_{L/R} = \sum_{i=1}^{N-1} \left( \ket{x}_i^{L/R} \bra{x} + \ket{x}_{i+1}^{L/R} \bra{x} - F_i^{L/R} - (F_i^{L/R})^\dagger \right).
\end{equation}
$H_L$ thus produces an energy penalty for having the $\ket{x}$ in the $i$th site in the left ring, but cancels that energy penalty if the $\ket{x}$ hops to the $(i+1)$th site, swapping with the state there; similarly, there is a penalty for having the $\ket{x}$ in the $(i+1)$th site of the left ring which is cancelled by letting it hop to the $i$th site.  $H_L$ thus tends to produce a superposition of $\ket{x}$ over sites $1$ through $N$ in the left ring, with the qubit states shifting appropriately to make room, and $H_R$ favors a similar superposition over sites in the right ring.  Note, however, that $H_L$ and $H_R$ do not have terms allowing $\ket{x}$ to hop between sites $1$ and $N$.

Instead, we let $H_B$ provide those terms:
\begin{equation}
H_B = \ket{xx}_N^{LR} \bra{xx} + \ket{xx}_1^{LR} \bra{xx} - F_N^L F_N^R - (F_N^L)^\dagger (F_N^R)^\dagger.
\end{equation}
$H_B$ has a structure similar to $H_L$ and $H_R$, but it instead provides a penalty only if the $\ket{x}$ states for the two rings are both at site $1$ or both at site $N$, and then gives a cancelling energy bonus if they hop together to the other pair (i.e., both hop from site $N$ to site $1$ or from site $1$ to site $N$).  In conjunction with $H_L$ and $H_R$, this means that the $\ket{x}$ states in the left and right rings can wander separately in the two rings, but cannot complete a full circle around a ring except by hopping together.

$H_P$ is defined as a projector:
\begin{equation}
H_P = (I - \ket{x}_1^L \bra{x}) (I - \ket{x}_1^R \bra{x}) - \ket{\Psi^-}_1^{LR} \bra{\Psi^-},
\end{equation}
which produces an energy penalty for any state which has a qubit state in site number $1$ for both rings, unless the two qubit states form a singlet state $\ket{\Psi^-} = (\ket{01}-\ket{10})/\sqrt{2}$.

Finally, let
\begin{equation}
H_V^{L/R} = V_1-V_1 \sum_{i=1}^N \ket{x}_i^{L/R} \bra{x} + V_2 \sum_{i=1}^N \ket{xx}_{i,i+1}^{L/R} \bra{xx},
\end{equation}
and
\begin{equation}
H_V = H_V^L + H_V^R.
\end{equation}
We choose $V_1$ and $V_2$ to be positive constants, so $H_V$ will produce an energy bonus proportional to the total number of $\ket{x}$ states in the two rings, but an energy penalty if any two $\ket{x}$ states are adjacent in a single ring.  Since $H_L$, $H_R$, and $H_B$ will tend to produce superpositions of different locations for the $\ket{x}$ states in each ring, if there is more than one $\ket{x}$ within a given ring, $H_L$, $H_R$, and $H_B$ will favor states where some $\ket{x}$ states are adjacent, which will then get an energy penalty due to the $V_2$ term.  By tuning $V_1$ and $V_2$ appropriately, we will ensure that the ground state has exactly one $\ket{x}$ state in each ring, as desired.  The constant term $2V_1$ then gives that ground state $0$ energy.

Note that the terms $H_L$, $H_R$, $H_B$, and $H_P$ are all non-negative operators.  However, $H_V^L$ and $H_V^R$ are not, which means we will need to exercise some caution in bounding the energy gap.

\section{The Ground State and the Gap}

We now turn to determining the structure of the ground state of $H$ and its spectral gap.  We shall prove the following:
\begin{theorem}
Suppose $V_1 = 1/N^4$ and $V_2 = 1$.  Then
\begin{enumerate}
\item[a)] $H$ has a nondegenerate ground state $\ket{g}$,
\item[b)] The spectral gap $\Delta$ of $H$ is $\Omega(1/N^4 \log N)$, and
\item[c)] $S(\rho_L) = S(\rho_R) \geq N-1$, where $\rho_{L/R} = \tr_{R/L} \ket{g} \bra{g}$.
\end{enumerate}
\end{theorem}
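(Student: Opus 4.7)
The plan is to exhibit the ground state $\ket{g}$ explicitly, verify $H\ket{g}=0$ term by term, and then control the spectral gap by decomposing the Hilbert space into invariant subspaces indexed by the conserved pair $(k_L,k_R)$ counting $\ket{x}$ states in the two rings. For each $(j,k)\in\{1,\ldots,N\}^2$ I would define a canonical state $\ket{\phi_{j,k}}$ with $\ket{x}$ at sites $j^L,k^R$ and the remaining $N-1$ qubit pairs arranged in singlets $\ket{\Psi^-}$, where the pairing rule is obtained from the initial configuration at $(j,k)=(1,1)$ (singlets at $(i^L,i^R)$ for $i=2,\ldots,N$) by tracking how each singlet moves under the hopping operators $F_i$. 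Then $\ket{g}\propto\sum_{j,k}\ket{\phi_{j,k}}$ is the proposed ground state.

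The verification is term by term. The three kinetic operators $H_L,H_R,H_B$ act on $\mathcal{C}:=\mathrm{span}\{\ket{\phi_{j,k}}\}$ exactly as the graph Laplacian of the $N\times N$ grid augmented by a single edge $(1,1)\!-\!(N,N)$ coming from $H_B$, so the uniform superposition lies in its kernel. $H_P\ket{\phi_{j,k}}=0$: when $j=1$ or $k=1$ the projector factor $(I-\ket{x}_1^L\bra{x})(I-\ket{x}_1^R\bra{x})$ kills the state, and otherwise the pair at site $1$ is $\ket{\Psi^-}$ by construction; and $H_V\equiv 0$ on the $(1,1)$ sector since each ring carries exactly one (non-adjacent) $\ket{x}$. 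For uniqueness I would use the observation that the graph of configurations has nontrivial fundamental group, and the holonomy of traversing the loop via the $H_B$-edge applies a cyclic permutation to the $N-1$ singlet slots; any state in the kinetic kernel therefore has all $N-1$ slots carrying a common two-qubit state $S$, and $H_P$ then pins $S=\ket{\Psi^-}$.

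Part (c) drops out of the construction: because the position of $\ket{x}^L$ is observable, the $\ket{\phi_{j,k}}$ with distinct values of $j$ have mutually orthogonal supports on the left ring, and inside a fixed $j$, tracing out the right ring turns each singlet into the maximally mixed qubit state, so the reduced density matrix is block-diagonal in $j$ with each block having entropy exactly $N-1$; hence $S(\rho_L)\geq N-1$. For (b), sector by sector: within $(1,1)$, the kinetic kernel is four-dimensional (one direction per value of $S$), on it $H_P$ has gap $\Omega(1)$ between $\ket{\Psi^-}$ and the triplets, while off the kinetic kernel the kinetic operator itself gives $\Omega(1/N^2)$, so the $(1,1)$ gap is $\Omega(1/N^2)$. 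The sectors $(0,0),(1,0),(0,1)$ admit an analogous ``traveling singlet'' ground-state construction with energy $V_1$ or $2V_1=\Omega(1/N^4)$. In every sector with $k_L+k_R\geq 3$, a hard-core free-fermion analysis of the multi-hole kinetic spectrum forces energy at least $\Omega(1/N^2)$, which dominates any $-V_1$ bonus from the extra holes. Taking the minimum over sectors gives the claimed $\Omega(1/(N^4\log N))$.

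The main obstacle is making the sectoral bound of $\Omega(V_1)$ in the low-particle-number sectors $(0,0),(1,0),(0,1)$ genuinely robust: one must show that every almost-kinetic-zero state in these sectors still pays essentially the full $V_1$ of potential energy, ruling out cancellations between the large kinetic-kernel degeneracy and the very small $V_1$ perturbation. It is in quantifying this---essentially a perturbative spectral-gap estimate inside a large kinetic-kernel subspace where $H_P$ is only loosely active---that the logarithmic correction $\log N$ in the final bound arises.
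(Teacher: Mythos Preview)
Your uniqueness argument and your $(1,1)$-sector gap estimate both rest on the claim that the kinetic kernel (the null space of $H_L + H_R + H_B$) inside $\Hil_{1,1}$ is four-dimensional, one direction for each choice of a common Bell state $S$ filling all $N-1$ slots. This is incorrect. The holonomy around the single loop is indeed the cyclic shift $\sigma$ on the $(N{-}1)$-tuple of Bell states, but this does not force the tuple to be constant. For any sequence $(\alpha_1,\ldots,\alpha_{N-1})$ of Bell states with period $p$ under $\sigma$, the connected component of the configuration graph through $(N,N;\vec\alpha)$ consists of all $(a,b;\sigma^r\vec\alpha)$ with $1\le a,b\le N$ and $0\le r<p$, and the uniform superposition over that component is annihilated by $H_L+H_R+H_B$. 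The kinetic kernel therefore has one dimension per cyclic equivalence class of Bell-state sequences---of order $4^{N-1}/(N{-}1)$ dimensions, not four---and on it $H_P$ is far from having an $\Omega(1)$ gap: the kernel direction coming from a sequence with a single non-singlet entry and period $p=N{-}1$ already has $\langle H_P\rangle=\Theta(1/N)$.

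Consequently the $(1,1)$ sector is the hard part of the gap analysis, not the easy one, and your claimed $\Omega(1/N^2)$ there is unjustified. The real work is to show that in each non-singlet component one cannot simultaneously make $\langle H_P\rangle$ small and stay near the kinetic kernel: driving $\langle H_P\rangle$ below its kernel value $\Theta(|R|/p)$ forces weight onto excited grid modes $\ket{\psi_{kl},r}$, and the optimal spreading over $(k,l)$ costs $\Omega\bigl(1/(p^2 N^2\log N)\bigr)$ in kinetic energy, the $\log N$ entering exactly here via $\sum_{(k,l)\ne(0,0)} (k^2+l^2)^{-1}=\Theta(\log N)$. With the worst case $p=N{-}1$ this yields only $\Omega(1/(N^4\log N))$ for the $(1,1)$ sector. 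Your localisation of the logarithm is therefore inverted: the sectors $(0,0),(1,0),(0,1)$ are trivial, since $H_L,H_R,H_B,H_P\ge 0$ and $H_V$ equals $V_1$ or $2V_1$ identically there, with no perturbative subtlety and no logarithmic correction.
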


The first thing to notice is that no term of $H$ changes the number of $\ket{x}$ states in a ring.  Therefore, we can treat the Hilbert space as a direct sum of subspaces with different numbers of $\ket{x}$ states in the two rings and analyze them separately.  Let the subspace with $a$ $\ket{x}$ states in the left ring and $b$ $\ket{x}$ states in the right ring be $\Hil_{a,b}$. We start with $\Hil_{1,1}$, which we claim contains the global ground state.  We therefore analyze the ground state (which will have energy $0$) and gap within this subspace.  Then we show that all states in the other subspaces have energy at least $O(1/N^4)$.

\subsection{One $\ket{x}$ in Each Ring}

Within the subspace $\Hil_{1,1}$, the Hamiltonian term $H_V$ becomes uniformly $0$, so we can ignore it.  We are thus left with $H = H_L + H_R + H_B + H_P$, and we wish to find the ground state and gap of this Hamiltonian on $\Hil_{1,1}$.

First, let us define some convenient basis states.  We start with states which have both $\ket{x}$ states in site $N$ in their respective rings, and define
\begin{equation}
\ket{\{\alpha_i\}} = \ket{\alpha_1}_1^{LR} \ket{\alpha_2}_2^{LR} \cdots \ket{\alpha_{N-1}}_{N-1}^{LR} \ket{xx}_N^{LR},
\end{equation}
where $\ket{\alpha_i}$ is a Bell state, $i = 1, \ldots, N-1$.  The state $\ket{\{\alpha_i\}}$ thus has the two qubit states at the $j$th position in each ring sharing the particular entangled state $\ket{\alpha_j}$.

Now define $M_{a,b}$ to be an operator that shifts $\ket{x}$ to site $a$ in the left ring and site $b$ in the right ring:
\begin{equation}
M_{a,b} = \left( S_a^{L} S_{a+1}^{L} \cdots S_{N-1}^{L} \right) \left( S_b^{R} S_{b+1}^{R} \cdots S_{N-1}^{R} \right).
\end{equation}
(Make the convention that if $a$ or $b$ is equal to $N$, do not include the L or R operators, respectively.  Thus $M_{N,N} = I$.)
Then $M_{a,b} \ket{\{\alpha_i\}}$ has the $\ket{x}$ states at sites $a$ and $b$ in the left and right rings, respectively, but some of the entangled states $\ket{\alpha_i}$ are now misaligned: If $j < \min\{a,b\}$ then the two sites $j$ share $\ket{\alpha_j}$, and if $j > \max\{a,b\}$, then the two sites $j+1$ share $\ket{\alpha_j}$.  However, if $a < j < b$, then the state $\ket{\alpha_j}$ is shared between site $j+1$ in the left ring and site $j$ in the right ring, whereas if $b < j < a$, then $\ket{\alpha_j}$ is shared between site $j$ in the left ring and $j+1$ in the right ring.  Note that for $M_{N,1} \ket{\{\alpha_i\}}$, all pairs are misaligned by one.  Thus, even though $S_N^R M_{N,1} \ket{\{\alpha_i\}}$ has both $\ket{x}$ sites back at site $N$, the entangled pairs are misaligned, so it is not a state of the form $\ket{\{\beta_i\}}$ for any $\{\beta_i\}$.

Let $R = M_{1,1}^\dagger S_N^L S_N^R$.  Then $R\, \ket{\{\alpha_i\}}$ also has both $\ket{x}$ states back at site $N$, but since the two $\ket{x}$ states have {\em both} circled the ring once, the pairs remain in sync.  They have all been shifted by one place:
\begin{equation}
R\, \ket{\{\alpha_i\}} = \ket{\{\beta_i\}},
\end{equation}
with $\ket{\beta_i} = \ket{\alpha_{i+1}}$ ($i = 1, \ldots, N-2$) and $\ket{\beta_{N-1}} = \ket{\alpha_{1}}$.

We now look at the action of the Hamiltonian on these states.  The states $M_{a,b} \ket{\{\alpha_i\}}$ form a basis for $\Hil_{1,1}$, and $H_P$ is diagonal in this basis, with eigenvalue $0$ for any state of the form $M_{1,b} \ket{\{\alpha_i\}}$, $M_{a,1} \ket{\{\alpha_i\}}$ (for any $a$, $b$), or for $M_{a,b} \ket{\{\alpha_i\}}$, with $a,b > 1$ and $\ket{\alpha_1} = \ket{\Psi^-}$.  $H_L$ and $H_R$ will interact different states $M_{a,b} \ket{\{\alpha_i\}}$ with the same list $\{\alpha_i\}$ of Bell states.  $H_B$ will couple $M_{1,1} \ket{\{\alpha_i\}}$ with $R\, \ket{\{\alpha_i\}}$, so the combination $H_L + H_R + H_B$ couples all states of the form $M_{a,b} \ket{\{\alpha_i\}}$ with the list of Bell states $\{\alpha_i\}$ fixed up to cyclic shifts.  Suppose the list $\{\alpha_i\}$ has period $p$ under cyclic shifts (i.e., $R^p \ket{\{\alpha_i\}} = \ket{\{\alpha_i\}}$).  Then it follows that
\begin{equation}
\ket{S(\{\alpha_i\})} \equiv \frac{1}{\sqrt{pN^2}} \sum_{r=0}^{p-1} \sum_{a,b} M_{a,b} R^r \ket{\{\alpha_i\}}
\end{equation}
is an eigenvector of $H_L + H_R + H_B$ with eigenvalue $0$, and all states annihilated by $H_L + H_R + H_B$ are of this form.  We note that the Hilbert subspace $\Hil_{1,1}$ further decomposes into invariant subspaces of $H$, each of which can be labeled by a list of Bell states $\{\alpha_i\}$, with two lists equivalent if they differ only by a cyclic shift on $i$.  We will refer to the subspace associated to the list $\{\alpha_i\}$ as $\Hil(\{\alpha_i\})$; we choose some convenient representative list $\{\alpha_i\}$ from the set of equivalent lists to label each $\Hil(\{\alpha_i\})$.

The state $\ket{g} = \ket{S(\{\alpha_i\})}$ with $\ket{\alpha_i} = \ket{\Psi^-}$ for all $i$ also has eigenvalue $0$ for $H_P$, as every term in it has either $\ket{x}$ in site $1$ for at least one ring or has a singlet shared between the two $1$ sites.  All other states $\ket{S(\{\alpha_i\})}$ do not have this property: Even if $\ket{\alpha_1} = \ket{\Psi^-}$, some $\ket{\alpha_i} \neq \ket{\Psi^-}$, and therefore $H_P R^{i-1} \ket{\{\alpha_i\}} = 1$.  $\ket{g}$ is thus the ground state within $\Hil_{1,1}$ (and, we will show, within the whole Hilbert space), whereas we claim any other state has substantially higher energy.

\begin{theorem}
\label{thm:onexgap}
If $\ket{\psi} \in \Hil(\{\alpha_i\})$ with $\ket{\alpha_i} \neq \ket{\Psi^-}$ for at least one value of $i$, then
\begin{equation}
\bra{\psi} H_L + H_R + H_B + H_P \ket{\psi} = \Omega(1/p^2 N^2 \log N),
\end{equation}
where $p$ is the periodicity of $\{\alpha_i\}$ under cyclic permutations.
\end{theorem}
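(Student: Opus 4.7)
The plan is to work in the orthonormal basis $\{M_{a,b}R^r\ket{\{\alpha_i\}}\}$ of $\Hil(\{\alpha_i\})$, recognize $H_L+H_R+H_B$ as a graph Laplacian and $H_P$ as a diagonal potential, and then combine a Fourier-based gap estimate with a Kitaev-style geometric argument. In the basis indexed by $(a,b,r) \in \{1,\ldots,N\}^2 \times \mathbb{Z}/p\mathbb{Z}$, the operator $H_L+H_R+H_B$ becomes the graph Laplacian of a graph $G$ consisting of $p$ copies of the $N\times N$ grid, joined in a cycle via the bridge edges $(1,1,r) \sim (N,N,r+1)$ produced by $H_B$. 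The potential $H_P$ is diagonal, equal to $1$ at $(a,b,r)$ when $a,b > 1$ and $\alpha_{1+r \bmod (N-1)} \neq \Psi^-$, and $0$ otherwise. The unique ground state of $H_L+H_R+H_B$ on $\Hil(\{\alpha_i\})$ is the uniform state $\ket{S(\{\alpha_i\})}$.

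First I would compute the spectral gap $g_A$ of $H_L+H_R+H_B$. Fourier-decomposing in the cyclic $r$ direction, each Fourier mode $k$ reduces to the Laplacian of the $N \times N$ grid with a single twisted edge between $(1,1)$ and $(N,N)$ bearing phase $e^{2\pi ik/p}$. Using a phase-ansatz $\psi(a,b) \propto e^{i\theta(a,b)}/N$ where $\theta$ is the discrete harmonic function on the grid realizing $\theta(1,1)-\theta(N,N) = 2\pi k/p$, and applying the fact that the effective resistance between opposite corners of a 2D grid is $\Theta(\log N)$, the minimum energy of the $k$-th mode works out to $\Theta(k^2/(p^2N^2\log N))$. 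Combined with the $\Omega(1/N^2)$ gap of the $k=0$ mode above its uniform ground state, this gives $g_A = \Omega(1/(p^2N^2\log N))$. Next, compute the overlap $\bra{S}H_P\ket{S} = q(N-1)^2/(pN^2) = \Omega(1/p)$, where $q \geq 1$ counts indices $r$ with $\alpha_{1+r} \neq \Psi^-$.

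Finally, I would invoke a Kitaev-style geometric lemma to combine these estimates. The null spaces of $H_L+H_R+H_B$ (spanned by $\ket{S}$) and $H_P$ (spanned by the allowed basis states) intersect only at $\{0\}$, with angle $\theta$ satisfying $\sin^2\theta = \bra{S}H_P\ket{S} = \Omega(1/p)$. The lemma should yield $H_L+H_R+H_B+H_P \geq \Omega(g_A)$ on $\Hil(\{\alpha_i\})$, which is exactly the claimed $\Omega(1/(p^2N^2\log N))$ bound.

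The hard part will be the final combination. A direct application of Kitaev's lemma gives only $\Omega(g_A\sin^2\theta) = \Omega(g_A/p)$, a factor of $p$ short of what is needed. Closing that gap requires exploiting the strongly asymmetric regime $\sin^2\theta \gg g_A$ via a sharper argument: any $\ket{\psi}$ with $\bra{\psi}H\ket{\psi} \ll g_A$ is forced by the gap of $H_L+H_R+H_B$ to have squared overlap with $\ket{S}$ close to $1$, and one must then lower-bound $\bra{\psi}H_P\ket{\psi}$ via the projector identity $H_P^2 = H_P$ and a Cauchy--Schwarz estimate on cross terms between $\ket{S}$ and its orthogonal complement, taking care that $\bra{\psi^\perp}H_P\ket{\psi^\perp}$ cannot be used to cancel the leading $|c_0|^2\bra{S}H_P\ket{S}$ contribution. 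The gap calculation in the first step, while technical, follows standard 2D random-walk and effective-resistance techniques.
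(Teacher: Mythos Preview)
Your overall framework---Fourier in the cyclic $r$ direction, recognize $H_L+H_R+H_B$ as a graph Laplacian, compute its gap, then combine with $H_P$---matches the paper's. Your effective-resistance computation of the mode-$k$ energy $\Theta(k^2/(p^2N^2\log N))$ is a pleasant alternative to the paper's direct Fourier calculation and essentially recovers its Proposition~\ref{prop:xmovegap}. The problem is the final combination step, where your proposed ``sharper argument'' does not in fact improve on Kitaev.

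Here is why. Write $\ket{\psi}=c_0\ket{S}+\ket{\psi^\perp}$ with $\ket{\psi^\perp}\perp\ket{S}$, and set $s=\bra{S}H_P\ket{S}=\Theta(1/p)$, $t=\bra{\psi^\perp}H_P\ket{\psi^\perp}$. Using $H_P^2=H_P$ and Cauchy--Schwarz as you suggest gives only
\[
\bra{\psi}H_P\ket{\psi}\ \ge\ |c_0|^2 s - 2|c_0|\sqrt{st}+t\ =\ \bigl(|c_0|\sqrt{s}-\sqrt{t}\bigr)^2,
\]
and this lower bound is \emph{attained} (with value $0$) by the state $\ket{\psi}=(I-H_P)\ket{S}/\|(I-H_P)\ket{S}\|$, which lies in $\ker H_P$ and has $\|\psi^\perp\|^2=s=\Theta(1/p)$. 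For that state the only information you can extract from the kinetic part is $\bra{\psi}(H_L+H_R+H_B)\ket{\psi}\ge g_A\|\psi^\perp\|^2=\Theta(g_A/p)$, i.e.\ exactly the Kitaev bound. The projector identity and Cauchy--Schwarz alone cannot see anything beyond $\|\psi^\perp\|$, so they cannot close the factor of $p$.

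What the paper does differently is refuse to collapse $\ket{\psi^\perp}$ to its norm. It expands $\ket{\psi}$ in the full $\ket{\psi_{klm}}$ basis and computes the individual matrix elements $\bra{\psi_{klm}}H_P\ket{\psi_{000}}$. Because $H_P$ only tests whether $a=1$ or $b=1$, cosine orthogonality forces these to decay: they are $O(1/N)$ when exactly one of $k,l$ is nonzero and $O(1/N^2)$ when both are. The only cross terms of the same order as the diagonal are with the modes $(k,l)=(0,0)$, $m\neq 0$. For those the paper invokes Lemma~\ref{lemma:highm}, which bounds the kinetic energy of the $m$-sector below by $\Omega(m'^2|a_{00m}|^2/(p^2N^2\log N))$---note the dependence on the specific coefficient $|a_{00m}|$, not on the full sector norm. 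A case split ($X$, $Y$, $Z$ in the paper) then finishes the argument; in the decisive case $\sum_{m>0}|a_{00m}|=\Theta(1)$, minimizing $\sum_m m'^2|a_{00m}|^2$ under that constraint gives $\Omega(1)$, hence the claimed $\Omega(1/(p^2N^2\log N))$. Your step~1 already contains the key mode-resolved estimate (the $k^2$ scaling), but you discard it by passing to $g_A=\min_k(\cdots)$; to make your outline work you must keep that $m$-dependence and pair it with the explicit decay of $\bra{\psi_{klm}}H_P\ket{\psi_{000}}$ in $k,l$.
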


Since $p \leq N$, the theorem implies that the gap of the full Hamiltonian within $\Hil_{1,1}$ is $\Omega(1/N^4 \log N)$.

As a warmup to proving this, we first consider $H_L + H_R + H_B$ and prove that it has a polynomial gap.  While we will not use precisely this proposition in the proof of Theorem~\ref{thm:onexgap}, the proof uses a very similar technique, and the main lemma for Prop.~\ref{prop:xmovegap} will also be central to proving Theorem~\ref{thm:onexgap}.

The Hamiltonian $H_L + H_R + H_B$ can be thought of as a quantum walk on a graph.  In particular, let us restrict to the invariant subspace $\Hil(\{\alpha_i\})$, where $\{\alpha_i\}$ has period $p$ under cyclic shifts.  We then get a graph $G_p$ of $pN^2$ nodes.  The states $\ket{a,b,r} \equiv M_{a,b} R^r \ket{\{\alpha_i\}}$ can be considered as the nodes of the graph, which consists of $p$ $N \times N$ grids, with the $(N,N)$ site of one grid connected to the $(1,1)$ site of the next one, and the $(N,N)$ site of the $p$th grid connected to the $(1,1)$ site of the $1$st grid.  (See figure~\ref{fig:grids}.)
%
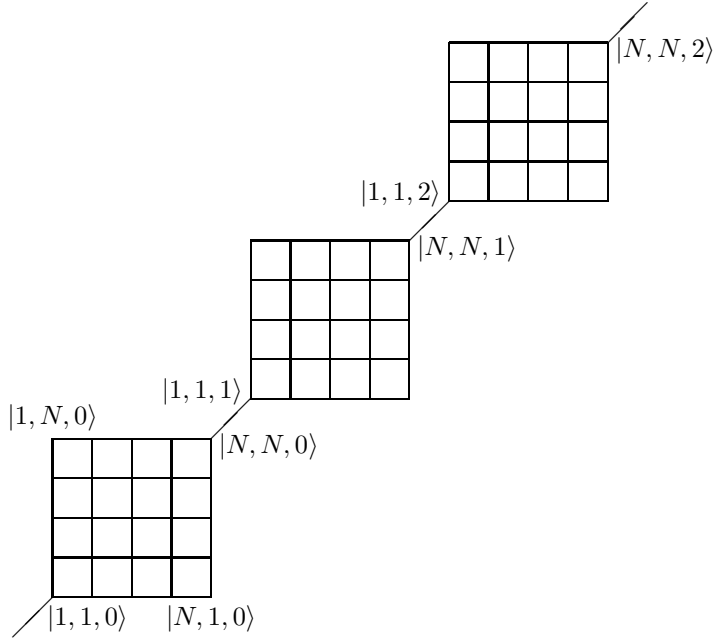
\begin{figure}
\begin{center}
\begin{picture}(240,240)

\put(0,0){\line(1,1){15}}

\multiput(15,15)(0,15){5}{\line(1,0){60}}
\multiput(15,15)(15,0){5}{\line(0,1){60}}

\put(13,12){\makebox(0,0)[tl]{$\ket{1,1,0}$}}
\put(75,12){\makebox(0,0)[t]{$\ket{N,1,0}$}}
\put(15,78){\makebox(0,0)[b]{$\ket{1,N,0}$}}
\put(78,77){\makebox(0,0)[tl]{$\ket{N,N,0}$}}

\put(75,75){\line(1,1){15}}

\multiput(90,90)(0,15){5}{\line(1,0){60}}
\multiput(90,90)(15,0){5}{\line(0,1){60}}

\put(87,88){\makebox(0,0)[br]{$\ket{1,1,1}$}}
\put(153,152){\makebox(0,0)[tl]{$\ket{N,N,1}$}}

\put(150,150){\line(1,1){15}}

\multiput(165,165)(0,15){5}{\line(1,0){60}}
\multiput(165,165)(15,0){5}{\line(0,1){60}}

\put(162,163){\makebox(0,0)[br]{$\ket{1,1,2}$}}
\put(228,227){\makebox(0,0)[tl]{$\ket{N,N,2}$}}

\put(225,225){\line(1,1){15}}

\end{picture}
\end{center}
\caption{The graph of basis states within $\Hil(\{\alpha_i\})$.}
\label{fig:grids}
\end{figure}
Edges of the graph correspond to states which are coupled by a single off-diagonal term of the Hamiltonian.

\begin{prop}
\label{prop:xmovegap}
The gap of $H_L + H_R + H_B$ within $\Hil(\{\alpha_i\})$ is $\Omega(1/p^2 N^2 \log N)$.
\end{prop}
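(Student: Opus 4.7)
Restricted to $\Hil(\{\alpha_i\})$ and written in the basis $\ket{a,b,r}=M_{a,b}R^{r}\ket{\{\alpha_i\}}$ already defined in the paper, the operator $H_L+H_R+H_B$ is exactly twice the combinatorial Laplacian of the graph $G_p$: each off-diagonal term in $H_L$, $H_R$, $H_B$ has the form $\ket{u}\bra{u}+\ket{v}\bra{v}-\ket{u}\bra{v}-\ket{v}\bra{u}$ for a single edge $(u,v)\in E(G_p)$, and the zero eigenvector is the uniform superposition $\ket{S(\{\alpha_i\})}$. By Rayleigh--Ritz the proposition reduces to the graph Poincar\'e inequality
\[
\mathcal{E}(f):=\sum_{(u,v)\in E(G_p)}(f(u)-f(v))^2 \;\geq\; \frac{c}{p^{2}N^{2}\log N}\sum_{v}f(v)^{2}
\]
for every real $f$ on $V(G_p)$ with $\sum_v f(v)=0$.

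I attack this by a mean/fluctuation decomposition $f=f_{0}+\tilde f$, where $f_{0}$ takes the constant value $c_{r}:=N^{-2}\sum_{v\in\text{grid }r}f(v)$ on the $r$-th grid and $\tilde f$ has mean zero on each grid, so $\sum_{r}c_{r}=0$ and $\|f\|^{2}=\|f_{0}\|^{2}+\|\tilde f\|^{2}$. Accordingly $\mathcal{E}(f)=\mathcal{E}_{\text{inner}}(\tilde f)+\mathcal{E}_{\text{conn}}(f)$, separating intra-grid edges (where $f_{0}$ is locally constant) from the $p$ corner-to-corner edges. Three ingredients then drive the bound: (i) the ordinary Poincar\'e inequality for the $N\times N$ grid Laplacian, $\mathcal{E}_{\text{inner}}(\tilde f)\geq(c/N^{2})\|\tilde f\|^{2}$; (ii) the cycle Poincar\'e inequality applied to $(c_r)_r$, $\sum_{r}(c_{r}-c_{r+1})^{2}\geq (c/p^{2})\sum_{r}c_{r}^{2}=c\|f_{0}\|^{2}/(p^{2}N^{2})$; and (iii) the \emph{key technical lemma} that for any mean-zero $g$ on the $N\times N$ grid,
\[
g(\text{corner})^{2}\;\leq\; C(\log N)\,\mathcal{E}_{N\times N}(g).
\]
Lemma (iii) is proved by expanding $g=\sum_{k\neq 0}g_{k}\psi_{k}$ in the grid-Laplacian eigenbasis and applying Cauchy--Schwarz, $g(\text{corner})^{2}\leq \mathcal{E}(g)\cdot\sum_{k\neq 0}\psi_{k}(\text{corner})^{2}/\lambda_{k}$; with the explicit sinusoidal eigenfunctions $\psi_{jk}$ and eigenvalues $\lambda_{jk}\asymp(j^{2}+k^{2})/N^{2}$ one gets $\sum_{(j,k)\neq(0,0)}1/(j^{2}+k^{2})=O(\log N)$.

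To combine the pieces I expand each connecting edge as $(f(u)-f(v))^{2}=((c_{r}-c_{r+1})+(\tilde f(u)-\tilde f(v)))^{2}\geq\tfrac{1}{2}(c_{r}-c_{r+1})^{2}-(\tilde f(u)-\tilde f(v))^{2}$, sum over the $p$ corner edges, and invoke (iii) at each of the $2p$ relevant corners to control the error by $C'(\log N)\,\mathcal{E}_{\text{inner}}(\tilde f)$; this yields $\mathcal{E}(f)\geq(1-C'\log N)\mathcal{E}_{\text{inner}}(\tilde f)+\tfrac{1}{2}\mathcal{E}_{\text{conn}}(f_{0})$. The negative coefficient on $\mathcal{E}_{\text{inner}}$ is neutralized by the trivial convex combination $\mathcal{E}(f)\geq s\,\mathcal{E}_{\text{inner}}(\tilde f)+(1-s)\mathcal{E}(f)$ (valid for $s\in[0,1]$ since $\mathcal{E}_{\text{inner}}\leq\mathcal{E}$) at the choice $1-s=1/(C'\log N)$, which kills the first coefficient and leaves $\mathcal{E}(f)\geq \mathcal{E}_{\text{conn}}(f_{0})/(2C'\log N)$. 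A final case split on whether $\|\tilde f\|^{2}$ or $\|f_{0}\|^{2}$ dominates $\|f\|^{2}$ finishes: if $\|\tilde f\|^{2}\geq\|f\|^{2}/2$, then (i) alone delivers a (much stronger) $\Omega(1/N^{2})$ bound; otherwise $\|f_{0}\|^{2}\geq\|f\|^{2}/2$ and (ii) combined with the derived inequality yields the tight $\Omega(1/(p^{2}N^{2}\log N))$ bound.

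The hard part is the Green's-function estimate (iii); the $\log N$ it introduces is essentially unavoidable, being the same logarithm that makes 2D random walks only marginally recurrent and makes the 2D grid's mixing time $\Theta(N^{2}\log N)$ rather than $\Theta(N^{2})$. Everything else is Dirichlet-form book-keeping arranged to deposit this logarithm exactly once in the final bound.
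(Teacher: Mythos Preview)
Your argument is correct and the route is genuinely different from the paper's, though the two share the same hard kernel. Both ultimately extract the $\log N$ from the Green's-function sum $\sum_{(j,k)\neq(0,0)}1/(j^{2}+k^{2})=O(\log N)$, and both finish with a two-case split on whether the ``mean'' or the ``fluctuation'' part dominates. The packaging differs: the paper works throughout in the eigenbasis of $H_L+H_R$ and in Fourier modes $m$ of the cyclic shift $R$, computes $\langle\psi|H_B|\psi\rangle$ explicitly, isolates the $(k,l)=(0,0)$ contribution, and shows for each fixed $m$ that either $H_B$ or $H_L+H_R$ is $\Omega(m'^{2}|c_{00}|^{2}/(p^{2}N^{2}\log N))$ (Lemma~\ref{lemma:highm}). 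You instead stay in real space, split $f$ into per-grid means and mean-zero fluctuations, and glue the grid and cycle Poincar\'e inequalities together through the corner-value estimate (iii). Your version is cleaner for the proposition itself and makes the connection to bottleneck/mixing-time bounds for glued graphs transparent. The paper's version has the advantage that its intermediate Lemma~\ref{lemma:highm} is mode-resolved in $m$ and is reused verbatim in the proof of Theorem~\ref{thm:onexgap}; your decomposition would need some rephrasing to yield that statement.

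Two small corrections. First, $H_L+H_R+H_B$ is exactly the graph Laplacian of $G_p$, not twice it (each edge term is $(\ket{u}-\ket{v})(\bra{u}-\bra{v})$). Second, your ``convex combination'' sentence is garbled as written: what you want is the convex combination of the trivial bound $\mathcal{E}(f)\geq\mathcal{E}_{\text{inner}}(\tilde f)$ with the derived bound $\mathcal{E}(f)\geq(1-C'\log N)\mathcal{E}_{\text{inner}}(\tilde f)+\tfrac{1}{2}\mathcal{E}_{\text{conn}}(f_0)$, with weights $s$ and $1-s$, choosing $1-s=1/(C'\log N)$ to kill the $\mathcal{E}_{\text{inner}}$ coefficient; the intended manipulation is correct.
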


\begin{proof}[Proof of proposition]
$H_L$ is a simple quantum walk on a line of length $N$.  It can be explicitly diagonalized, with eigenstates
\begin{equation}
\ket{\psi_{k},b,r} = \sqrt{\frac{2}{N}} \sum_{a=1}^N \cos \left(\frac{\pi k}{N} (a-1/2)\right) \ket{a,b,r}
\end{equation}
and eigenvalues
\begin{equation}
\lambda_k = 2(1 - \cos \frac{\pi k}{N})
\end{equation}
for $k = 1, \ldots, N-1$.  $k=0$ is also possible (although with a different normalization constant), and gives the ground state $\ket{\psi_0,b,r} = (1/\sqrt{N}) \sum_a \ket{a,b,r}$, with energy $0$.  Similarly for $H_R$.  Since a single grid has the two coordinates completely uncoupled, the eigenstates $\ket{\psi_{kl},r}$ of $H_L + H_R$ are similarly simple, with eigenvalues $\lambda_{kl} = \lambda_k + \lambda_l$.  The ground state of $H_L + H_R$ (for fixed $r$) is $\ket{\psi_{00},r} = (1/N) \sum_{a,b} \ket{a,b,r}$.  For general $k, l$, let us write $\ket{\psi_{kl},r} = \sum_{a,b} f_{kl}(a,b) \ket{a,b,r}$.  Then
\begin{equation}
f_{kl}(a,b) = C_{kl} \cos\left(\frac{\pi k}{N} (a-1/2)\right) \cos\left(\frac{\pi l}{N} (b-1/2)\right),
\end{equation}
where $C_{kl}$ is a normalization constant: $C_{kl} = 1/N$ for $(k,l) = (0,0)$, $C_{kl} = \sqrt{2}/N$ if exactly one of $k$, $l$ is $0$, and $C_{kl} = 2/N$ otherwise.

The ground state of $H_L + H_R + H_B$ is again simple:
\begin{equation}
\ket{{\rm ground}} = \frac{1}{\sqrt{p}} \sum_{r=0}^{p-1} \ket{\psi_{00},r} = \frac{1}{\sqrt{p} N} \sum_{r=0}^{p-1} \sum_{a,b = 1}^{N} \ket{a,b,r}.
\end{equation}
However, the higher eigenstates are more complicated.  We can simplify slightly by noting that the graph has translation invariance under the action of $R$, so the eigenstates of $H_L + H_R + H_B$ will also be eigenstates of $R$.  $R$ has eigenvalues $\exp(2\pi i m/p)$ (since $R^p = I$ on the subspace $\Hil(\{\alpha_i\})$).  Any eigenstate $\ket{\psi}$ of $R$ can be written as
\begin{equation}
\ket{\psi} = \frac{1}{\sqrt{p}} \sum_{r=0}^{p-1} \sum_{k,l=0}^{N-1} c_{kl} e^{2\pi i mr/p} \ket{\psi_{kl},r},
\label{eigenstate}
\end{equation}
with $m = 0, \ldots, p-1$.  Let us consider the expected energy $\bra{\psi} H_L + H_R + H_B \ket{\psi}$ of any state $\ket{\psi}$ which is orthogonal to the ground state $\ket{{\rm ground}}$ and is an eigenstate of $R$.  If $m=0$, $c_{00} = 0$ (because the state is orthogonal to the ground state), so
\begin{equation}
\bra{\psi} H_L + H_R \ket{\psi} \geq \lambda_1 = \Theta(1/N^2).
\end{equation}

Therefore, we should search for the first excited state among states with nonzero $m$.  In particular, we will show that for any state with $m>0$, either the state is dominated by the ground state of $H_L + H_R$ (i.e., $c_{00}$ is very near $1$), in which case it does not satisfy $H_B$ very well, or there is a significant mixture of higher eigenstates of $H_L + H_R$.

For the proof of Theorem~\ref{thm:onexgap}, we will need a stronger version of the result using unnormalized states.  The necessary result is contained in the following lemma:
\begin{lemma}
\label{lemma:highm}
Let
\begin{equation}
\ket{\psi} = \frac{1}{\sqrt{p}} \sum_{r=0}^{p-1} \sum_{k,l=0}^{N-1} c_{kl} e^{2\pi i mr/p} \ket{\psi_{kl},r}.
\end{equation}
Then
\begin{equation}
\bra{\psi} H_L + H_R + H_B \ket{\psi} = \Omega(m'^2 |c_{00}|^2/p^2 N^2 \log N),
\end{equation}
where $m' = \min(m, p-m)$.
\end{lemma}

When $\ket{\psi}$ is normalized, it is easy to see that $\bra{\psi} H_L + H_R \ket{\psi}$ is large ($\Omega(1/N^2)$) unless $|c_{00}|$ is order $1$, so the proposition follows immediately from the lemma.

\end{proof}

\begin{proof}[Proof of lemma~\ref{lemma:highm}]

Let us consider the expectation value of $H_B$ for $m > 0$:
\begin{equation}
\bra{\psi} H_B \ket{\psi} = \frac{1}{p} \sum_{r,r' = 0}^{p-1} \sum_{k,l,k',l' = 0}^{N-1} c_{kl} c^*_{k'l'} e^{2\pi i m (r-r')/p} \bra{\psi_{k'l'},r'} H_B \ket{\psi_{kl},r}.
\end{equation}
Since
\begin{multline}
\bra{\psi_{k'l'},r'} H_B \ket{\psi_{kl},r} = \delta_{r,r'} \left[ f_{kl}(1,1) f^*_{k'l'}(1,1) + f_{kl}(N,N) f^*_{k'l'}(N,N)  \right] \\
- \delta_{r,r'-1} f_{kl}(N,N) f^*_{k'l'}(1,1) - \delta_{r,r'+1} f_{kl}(1,1) f^*_{k'l'}(N,N),
\end{multline}
we have, focusing on the $(k,l) = (0,0)$ terms,
\begin{multline}
\bra{\psi} H_B \ket{\psi} =
\frac{1}{N^2} |c_{00}|^2 \left[ 2 - 2 \cos (2 \pi i m /p) \right] + \bra{\psi'} H_B \ket{\psi'}
\\ +
\frac{1}{N} \sum_{(k,l) \neq (0,0)} c_{kl} c^*_{00} \left[ f_{kl}(1,1)(1 - e^{2\pi i m/p}) + f_{kl}(N,N) (1 - e^{-2\pi i m/p}) \right]
\\ +
\frac{1}{N} \sum_{(k',l') \neq (0,0)} c_{00} c^*_{k'l'} \left[ f_{k'l'}^*(1,1)(1 - e^{-2\pi i m/p}) + f_{k'l'}^*(N,N) (1 - e^{2\pi i m/p}) \right],
\end{multline}
where
\begin{equation}
\ket{\psi'} = \frac{1}{\sqrt{p}} \sum_{r=0}^{p-1} \sum_{(k,l) \neq (0,0)} c_{kl} e^{2\pi i m r/p} \ket{\psi_{kl},r}.
\end{equation}
Since $H_B$ is positive semidefinite, $\bra{\psi'} H_B \ket{\psi'} \geq 0$, and
\begin{align}
\bra{\psi} H_B \ket{\psi} \geq &
\ \frac{1}{N^2} |c_{00}|^2 \left[ 2 - 2 \cos (2 \pi m /p) \right] \\
& \quad +
2 {\rm Re} \left\{ \frac{1}{N} \sum_{k,l \neq (0,0)} c_{kl} c^*_{00} \left[ f_{kl}(1,1)(1 - e^{2\pi i m/p}) + f_{kl}(N,N) (1 - e^{-2\pi i m/p}) \right] \right\} \nonumber \\
= &
\ \frac{2}{N^2} |c_{00}|^2 \left[ 1 - \cos (2 \pi m /p) \right] \\
& \quad +
\frac{2}{N} \sum_{k,l \neq (0,0)} {\rm Re} (c_{kl} c^*_{00}) C_{kl} \cos(\pi k/2N) \cos(\pi l/2N) [1 - \cos (2\pi m/p)] [1  + (-1)^{k+l}]
\nonumber \\ & \quad +
\frac{2}{N} \sum_{k,l \neq (0,0)} {\rm Im} (c_{kl} c^*_{00}) C_{kl} \cos(\pi k/2N) \cos(\pi l/2N) \sin (2\pi m/p) [1 - (-1)^{k+l}]
\nonumber \\
\geq &
\ \frac{2 |c_{00}|^2 }{N^2} \left[ 1 - \cos (2 \pi m /p) \right]
\\ & \quad -
\frac{8}{N^2} \sum_{k+l > 0 {\rm\ even}} |c_{kl}| |c_{00}|  \left[1 - \cos (2\pi m/p)\right]
-
\frac{8}{N^2} \sum_{k+l {\rm\ odd}} |c_{kl}| |c_{00}|  |\sin (2\pi m/p)|
\nonumber \\
= &
\ \frac{2|c_{00}| }{N^2} \left[ 1 - \cos (2 \pi m /p) \right] \left( |c_{00}| - 4 \sum_{k+l > 0 {\rm\ even}} |c_{kl}|
-
4 |\cot (\pi m /p)| \sum_{k+l {\rm\ odd}} |c_{kl}|  \right).
\end{align}
For $0 < m/p \leq 1/2$, $|\cot (\pi m /p)| \leq p/(m \pi)$, so
\begin{equation}
\bra{\psi} H_B \ket{\psi} \geq
\frac{2 |c_{00}|}{N^2} \left[ 1 - \cos (2 \pi m /p) \right] \left[ |c_{00}| - 4 \sum_{k+l > 0 {\rm\ even}} |c_{kl}|
- 4 \frac{p}{m\pi} \sum_{k+l {\rm\ odd}} |c_{kl}|  \right].
\label{eq:hbbound}
\end{equation}
For $1/2 \leq m/p < 1$, we can just consider $m' = p-m$, in which case $|\cot (\pi m /p)| = |\cot (\pi m'/p)|$ and $\cos (2 \pi m / p) = \cos (2 \pi m'/p)$, and we can just substitute $m'$ for $m$ in equation~(\ref{eq:hbbound}).  Otherwise, we let $m' = m$.

Therefore, one of two things must be true.  The first possibility is that the sum is dominated by $c_{00}$:
\begin{equation}
|c_{00}| - 4 \sum_{k+l > 0 {\rm\ even}} |c_{kl}| - 4 \frac{p}{m'\pi} \sum_{k+l {\rm\ odd}} |c_{kl}| \geq |c_{00}|/2,
\end{equation}
in which case
\begin{equation}
\bra{\psi} H_B \ket{\psi} \geq \frac{|c_{00}|^2}{N^2} \left[ 1 - \cos (2 \pi m' /p) \right] = \Omega\left(\frac{m'^2 |c_{00}|^2}{p^2 N^2}\right).
\end{equation}
The second possibility is that there is a non-negligible contribution from the higher $k,l$ eigenvalues:
\begin{equation}
|c_{00}| - 4 \sum_{k+l > 0 {\rm\ even}} |c_{kl}| - 4 \frac{p}{m'\pi} \sum_{k+l {\rm\ odd}} |c_{kl}| < |c_{00}|/2,
\end{equation}
which implies that
\begin{equation}
\sum_{k,l \neq (0,0)} |c_{kl}| = \Omega(m' |c_{00}|/p).
\label{eq:csum}
\end{equation}

However,
\begin{equation}
\bra{\psi} H_L + H_R \ket{\psi} = \sum_{k,l = 0}^{N-1} (\lambda_k + \lambda_l) |c_{kl}|^2.
\end{equation}
Note that
\begin{equation}
\lambda_k = 2(1 - \cos (\frac{\pi k}{N})) \geq \frac{4 k^2}{N^2}
\end{equation}
for $0 \leq k \leq N$, so
\begin{equation}
\bra{\psi} H_L + H_R \ket{\psi} \geq \frac{4}{N^2} \sum_{k,l \neq (0,0)} (k^2 + l^2) |c_{kl}|^2.
\label{eq:gridenergy}
\end{equation}
Under the constraint that $\sum_{k,l \neq (0,0)} |c_{kl}|$ is fixed (as suggested by (\ref{eq:csum})), the RHS of (\ref{eq:gridenergy}) is minimized when $|c_{kl}| = A/(k^2 + l^2)$ (this minimization can be done using Lagrange multipliers, for instance), with
\begin{equation}
\sum_{k,l \neq (0,0)} \frac{A}{k^2 + l^2} = \Omega(m' |c_{00}|/p).
\end{equation}
For large $N$, $\sum 1/(k^2 + l^2) = \Theta(\log N)$, so $A = \Omega(m' |c_{00}|/p \log N)$.  Thus,
\begin{equation}
\bra{\psi} H_L + H_R \ket{\psi} \geq \Omega(m'^2 |c_{00}|^2/p^2 N^2 \log^2 N) \sum_{k,l \neq (0,0)} \frac{1}{k^2 + l^2} = \Omega(m'^2 |c_{00}|^2/p^2 N^2 \log N).
\end{equation}
That is, regardless of the size of $|c_{00}|$, we have that $\bra{\psi} H_L + H_R + H_B \ket{\psi} = \Omega(m'^2 |c_{00}|^2/p^2 N^2 \log N)$, proving the lemma.

\end{proof}

We next turn to proving Theorem~\ref{thm:onexgap}.  The proof is somewhat similar: The ground state $\ket{{\rm ground}}$ of $H_L + H_R + H_B$ is completely delocalized in the $\ket{a,b,r}$ basis, so in particular has substantial energy for $H_P$.  In order to cancel the bad $\ket{a,b,r}$ sites, we will need in to add a nontrivial contribution of higher $k$, $l$, and $m$ values, which will increase the energy of $H_L + H_R + H_B$.

\begin{proof}[Proof of Theorem~\ref{thm:onexgap}]

Let
\begin{equation}
\ket{\psi_{klm}} = \frac{1}{\sqrt{p}} \sum_{r=0}^{p-1} e^{2\pi i mr/p} \ket{\psi_{kl},r},
\end{equation}
and
\begin{equation}
\ket{\psi} = \sum_{k,l,m = 0}^{N-1} a_{klm} \ket{\psi_{klm}}.
\end{equation}

We start by considering $|\bra{\psi_{klm}} H_P \ket{\psi_{000}}|$.  To calculate this, we expand $\ket{\psi_{klm}}$ and $\ket{\psi_{000}}$ in the $\ket{a,b,r}$ basis, since $H_P$ is diagonal in that basis.  Then $\bra{a',b',r'} H_P \ket{a,b,r}$ is $0$ unless $a=a'$, $b=b'$, and $r=r'$.  It is also $0$ if $a=1$ or $b=1$.  In all other cases, $\bra{a,b,r} H_P \ket{a,b,r} = 0$ if $\ket{\alpha_r} = \ket{\Psi^-}$ and $1$ otherwise.  Let
\begin{equation}
R = \{r \mbox{ s.t. } \ket{\alpha_r} \neq \ket{\Psi^-}\}.
\end{equation}
That is, $R$ is the set of ``bad'' $r$, that contribute to a nonzero energy for $H_P$.

Then
\begin{align}
|\bra{\psi_{klm}} H_P \ket{\psi_{000}}| = & \frac{1}{p} \left| \sum_{r,r'} e^{-2\pi i mr'/p} \bra{\psi_{kl},r'} H_P \ket{\psi_{00},r} \right| \\
= & \frac{1}{p} \left| \sum_{r \in R} e^{-2\pi i r m/p} \bra{\psi_{kl},r} H_P \ket{\psi_{00},r} \right| \\
\leq & \frac{|R|}{p} |\bra{\psi_{kl},0} H_P \ket{\psi_{00},0}|.
\end{align}
In the last line we assume, without loss of generality, that $0 \in R$.
It is also worth noting that for $m=0$, we have equality in the last line.  Furthermore,
\begin{equation}
|\bra{\psi_{00},0} H_P \ket{\psi_{00},0}| = \frac{1}{N^2} \sum_{a',a,b',b = 1}^{N} \bra{a',b',0} H_P \ket{a,b,0} = \frac{(N-1)^2}{N^2}.
\end{equation}
When $l \neq 0$,
\begin{align}
|\bra{\psi_{0l},0} H_P \ket{\psi_{00},0}| = & \frac{\sqrt{2}}{N^2} \left|\sum_{a',a,b',b = 1}^{N} \cos\left(\frac{\pi l}{N} (b'-1/2)\right) \bra{a',b',0} H_P \ket{a,b,0} \right| \\
= & \frac{\sqrt{2}(N-1)}{N^2} \left| \sum_{b=2}^{N} \cos\left(\frac{\pi l}{N} (b-1/2)\right) \right|.
\end{align}
But note that we have an orthogonality relationship for cosines, so that
\begin{equation}
\sum_{b=1}^{N} \cos\left(\frac{\pi l}{N} (b-1/2)\right) = 0,
\label{eq:orthogonal}
\end{equation}
so
\begin{equation}
|\bra{\psi_{0l},0} H_P \ket{\psi_{00},0}| = \frac{\sqrt{2}(N-1)}{N^2} \left| \cos\left(\pi l/2N\right) \right|
\end{equation}
when $l \neq 0$.  Similarly, when $k \neq 0$,
\begin{equation}
|\bra{\psi_{k0},0} H_P \ket{\psi_{00},0}| = \frac{\sqrt{2}(N-1)}{N^2} \left| \cos\left(\pi k/2N\right) \right|.
\end{equation}

Finally, when $k, l \neq 0$,
\begin{equation}
|\bra{\psi_{kl},0} H_P \ket{\psi_{00},0}| = \frac{2}{N^2} \left| \sum_{a,b=2}^{N} \cos\left(\frac{\pi k}{N} (a-1/2)\right) \cos\left(\frac{\pi l}{N} (b-1/2)\right) \right|.
\end{equation}
We use another orthogonality relationship:
\begin{align}
0 = & \sum_{a,b=1}^{N} \cos\left(\frac{\pi k}{N} (a-1/2)\right) \cos\left(\frac{\pi l}{N} (b-1/2)\right) \\
=  & \sum_{a,b=2}^{N} \cos\left(\frac{\pi k}{N} (a-1/2)\right) \cos\left(\frac{\pi l}{N} (b-1/2)\right)
+ \sum_{a=2}^{N} \cos\left(\frac{\pi k}{N} (a-1/2)\right) \cos\left(\frac{\pi l}{2N}\right) \nonumber \\
& \quad
+ \sum_{b=2}^{N} \cos\left(\frac{\pi l}{N} (b-1/2)\right) \cos\left(\frac{\pi k}{2N}\right)
+ \cos\left(\frac{\pi k}{2N}\right) \cos\left(\frac{\pi l}{2N}\right) \\
= & \sum_{a,b=2}^{N} \cos\left(\frac{\pi k}{N} (a-1/2)\right) \cos\left(\frac{\pi l}{N} (b-1/2)\right)
- \cos\left(\frac{\pi k}{2N}\right) \cos\left(\frac{\pi l}{2N}\right),
\end{align}
where in the last line, we have used the orthogonality relationship (\ref{eq:orthogonal}) for each of the two single sums.
We then get
\begin{equation}
|\bra{\psi_{kl},0} H_P \ket{\psi_{00},0}| = \frac{2}{N^2} \left| \cos\left(\frac{\pi k}{2N}\right) \cos\left(\frac{\pi l}{2N}\right) \right|.
\end{equation}

We now turn to considering $\bra{\psi} H_P \ket{\psi}$.  Let $\ket{\psi'} = \sum_{(k,l,m) \neq (0,0,0)} a_{klm} \ket{\psi_{klm}}$, so $\ket{\psi} = a_{000} \ket{\psi_{000}} + \ket{\psi'}$.  Since $H_P$ is positive definite,
\begin{align}
|\bra{\psi} H_P \ket{\psi}| \geq & |a_{000}|^2 \bra{\psi_{000}} H_P \ket{\psi_{000}} + 2\, {\rm Re} (a_{000} \bra{\psi'} H_P \ket{\psi_{000}}) \\
= & |a_{000}|^2 \frac{|R| (N-1)^2}{pN^2} + 2\, {\rm Re} \sum_{(k,l,m) \neq (0,0,0)} a_{000}\, a^*_{klm} \bra{\psi_{klm}} H_P \ket{\psi_{000}} \\
\geq & |a_{000}|^2 \frac{|R| (N-1)^2}{pN^2} - 2 \sum_{(k,l,m) \neq (0,0,0)} |a_{000}| |a_{klm}| |\bra{\psi_{klm}} H_P \ket{\psi_{000}}| \\
\geq & |a_{000}| \frac{|R| (N-1)^2}{pN^2} \left[|a_{000}| - 2 \sum_{m>0} |a_{00m}| - \frac{2 \sqrt{2}}{N-1} \sum_{k>0, m \geq 0} |a_{k0m}| \right. \nonumber \\
& \qquad \left. - \frac{2 \sqrt{2}}{N-1} \sum_{l>0, m \geq 0} |a_{0lm}| - \frac{4}{(N-1)^2} \sum_{k,l > 0, m \geq 0} |a_{klm}| \right] \\
= & |a_{000}| \frac{|R| (N-1)^2}{pN^2} \left[|a_{000}| - X - Y - Z \right],
\end{align}
where
\begin{align}
X = & \frac{2 \sqrt{2}}{N-1} \sum_{k>0} |a_{k00}| + \frac{2 \sqrt{2}}{N-1} \sum_{l>0} |a_{0l0}| + \frac{4}{(N-1)^2} \sum_{k,l > 0} |a_{kl0}| \\
Y = & \sum_{m>0} \left[\frac{2 \sqrt{2}}{N-1} \sum_{k>0} |a_{k0m}| + \frac{2 \sqrt{2}}{N-1} \sum_{l>0} |a_{0lm}| + \frac{4}{(N-1)^2} \sum_{k,l > 0} |a_{klm}| \right]\\
Z = & 2 \sum_{m>0} |a_{00m}|.
\end{align}

There are four cases:

\begin{itemize}

\item {\bf If $\mathbf{|a_{000}| - X - Y - Z \geq 1/2}$:} Then $|a_{000}| > 1/2$ as well, so $|\bra{\psi} H_P \ket{\psi}| \geq \frac{|R| (N-1)^2}{4pN^2} = \Omega(1/p)$.

\item {\bf If $\mathbf{X = \Theta(1)}$:} At least one of the sums $\frac{1}{N-1} \sum_{k>0} |a_{k00}|$, $\frac{1}{N-1} \sum_{l>0} |a_{0l0}|$, and $\frac{1}{(N-1)^2} \sum_{k,l > 0} |a_{kl0}|$ must also be $\Theta(1)$; the others may be smaller.  These three sums are all averages, and at least one of the terms in an average must be greater than or equal to the mean value.  Thus, there exists $(k_0,l_0) \neq (0,0)$ with $|a_{k_0 l_0 0}| = \Theta(1)$.  But
    \begin{equation}
    \bra{\psi} H_L + H_R \ket{\psi} = \sum_{k,l,m} |a_{klm}|^2 \lambda_{kl} \geq |a_{k_0 l_0 0}|^2 \Omega(1/N^2) = \Omega(1/N^2).
    \end{equation}

\item {\bf If $\mathbf{Y = \Theta(1)}$:}
    At least one of the sums $\frac{1}{N-1} \sum_{m>0} \sum_{k>0} |a_{k0m}|$, $\frac{1}{N-1} \sum_{m>0} \sum_{l>0} |a_{0lm}|$, and $\frac{1}{(N-1)^2} \sum_{m>0} \sum_{k,l > 0} |a_{klm}|$ must be $\Theta(1)$.  We have
    \begin{equation}
    \bra{\psi} H_L + H_R \ket{\psi} \geq \sum_{(k,l) \neq (0,0)} \sum_m |a_{klm}|^2 \Omega(1/N^2).
    \end{equation}
    When $\sum |a_{klm}|$ is fixed (summed over some subset of $(k,l,m)$), we minimize the sum $\sum |a_{klm}|^2$ (over the same subset of $(k,l,m)$) by taking all $|a_{klm}|$ to be equal (for $(k,l,m)$ in the subset).  In this particular case, we should choose $|a_{klm}| = \Theta(1/(p-1))$ for one of the sets $\{k>0, l=0, m>0\}$, $\{k=0, l>0, m>0\}$, or $\{k>0, l>0, m>0\}$ and we find $\sum_{klm} |a_{klm}|^2 = \Omega(N/p)$.  (Or $\Omega(N^2/p)$ in the third case.)  In any case, it certainly follows that $\bra{\psi} H_L + H_R \ket{\psi} = \Omega(1/N^2)$.

\item {\bf If $\mathbf{Z = \Theta(1)}$:}
    This is the most interesting case.  We note that
    \begin{equation}
    \bra{\psi} H_L + H_R + H_B \ket{\psi} = \sum_m \bra{\psi_m} H_L + H_R + H_B \ket{\psi_m},
    \end{equation}
    with $\ket{\psi_m} = \sum_{k,l} a_{klm} \ket{\psi_{klm}}$.
    By Lemma~\ref{lemma:highm},
    \begin{equation}
    \bra{\psi_m} H_L + H_R + H_B \ket{\psi_m} = \Omega(m'^2 |a_{00m}|^2/p^2 N^2 \log N),
    \end{equation}
    where $m' = \min(m, p-m)$.  Therefore, we wish to minimize $\sum_m m'^2 |a_{00m}|^2$ under the constraint that $\sum_{m>0} |a_{00m}| = \Theta(1)$.  The minimum is achieved (as can be shown by Lagrange multipliers, for instance) when $|a_{00m}| = B/m'^2$.  Since $\sum_m 1/m'^2 \leq 2 \zeta(2) = \pi^2/3$, it follows that $B = \Theta(1)$ and $\sum_m m'^2 |a_{00m}|^2 = \Omega(1)$.  Thus,
    \begin{equation}
    \bra{\psi} H_L + H_R + H_B \ket{\psi} = \Omega(1/p^2 N^2 \log N).
    \end{equation}

\end{itemize}

As it happens, case 2 cannot actually occur in the large $N$ limit, given the normalization constraint that $\sum |a_{klm}|^2 = 1$, but this is not very important since the gap is set by the fourth case.

\end{proof}

\subsection{Zero or Multiple $\ket{x}$ States}

We now turn attention to the subspaces $\Hil_{a,b}$, with either $a$ or $b$ different than $1$.  We can treat $H_V^L$ and $H_V^R$ separately, and show that $H_L + H_V^L$ is bounded below when $a \neq 1$, as is $H_R + H_V^R$ when $b \neq 1$.  Since the remaining Hamiltonian terms $H_B$ and $H_P$ are non-negative, bounding $H_L + H_V^L$ and $H_R + H_V^R$ is adequate to give a lower bound on $H$.  Since the two cases are identical, we focus on the left ring.

If $a=0$, $H_V^L$ has all eigenvalues equal to $V_1$ and $H_L$ is $0$.  If $a > 1$, there are some states in $\Hil_{a,b}$ which have a negative eigenvalue for $H_V^L$. However, those states will have a positive eigenvalue for $H_L$, and we must choose $V_1$ and $V_2$ to give such states a positive overall energy.

Let us restrict attention to a particular value of $a$; then $H_V^L + (a-1)V_1$ is a non-negative operator, which allows us to apply the following lemma (see \cite{Kitaevbook} for a proof):
\begin{lemma}
\label{lemma:Kitaev}
Let $A_1$ and $A_2$ be nonnegative linear operators with null spaces $L_1$ and $L_2$.  Suppose that $L_1 \cap L_2 = \{0\}$ and that no nonzero
eigenvalue of $A_1$ or $A_2$ is less than $v$.  Then
\begin{equation}
A_1 + A_2 \geq v \sin^2 (\theta/2),
\end{equation}
where $\theta = \theta(L_1, L_2)$ is the angle between $L_1$ and $L_2$.
\end{lemma}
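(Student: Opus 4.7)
The plan is to reduce the lemma to a standard spectral fact about a sum of two projectors. Let $P_1$ and $P_2$ denote the orthogonal projectors onto $L_1$ and $L_2$. Since $A_i$ is nonnegative with null space exactly $L_i$ and with no nonzero eigenvalue below $v$, functional calculus immediately gives the operator inequality $A_i \geq v(I - P_i)$. Adding,
\begin{equation}
A_1 + A_2 \;\geq\; v\bigl(2I - P_1 - P_2\bigr),
\end{equation}
so it suffices to upper-bound $\lambda_{\max}(P_1 + P_2)$ in terms of $\theta$.

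For the geometric step I would invoke the classical ``two-subspace'' (Halmos) decomposition of the ambient Hilbert space into an orthogonal sum of $(P_1{+}P_2)$-invariant pieces of four types: subspaces contained in $L_1 \cap L_2$, in $(L_1+L_2)^\perp$, in $L_1 \cap L_2^\perp$ or $L_1^\perp \cap L_2$, and two-dimensional blocks spanned by a unit vector of $L_1$ together with its partner in $L_2$ at principal angle $\theta_i$. On these four kinds of pieces $P_1+P_2$ has eigenvalues $2$, $0$, $1$, and $1 \pm \cos\theta_i$, respectively. The nondegeneracy hypothesis $L_1 \cap L_2 = \{0\}$ is precisely what rules out the eigenvalue $2$, so $\lambda_{\max}(P_1+P_2) = 1 + \cos\theta$, where $\theta$ is the smallest principal angle --- i.e.\ the angle $\theta(L_1,L_2)$ of the statement, characterized equivalently by $\cos\theta = \|P_1 P_2\|$. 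Plugging back gives
\begin{equation}
A_1 + A_2 \;\geq\; v\,(1 - \cos\theta) \;=\; 2v\sin^2(\theta/2),
\end{equation}
which is even slightly stronger than the claimed bound.

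The only real work is the geometric identification in the second paragraph: one must check that the spectral ``angle'' picked out by the two-projector decomposition matches the definition of $\theta(L_1,L_2)$ being used, and that the hypothesis $L_1 \cap L_2 = \{0\}$ is exactly the condition ensuring $\theta > 0$ so that the bound is nontrivial. Given the two-subspace structure theorem, everything else is a one-line functional-calculus argument.
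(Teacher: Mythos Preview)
Your argument is correct; the paper does not give its own proof of this lemma but simply cites the Kitaev--Shen--Vyalyi textbook, and what you have written is essentially the standard proof found there (reduce to projectors via $A_i \geq v(I-P_i)$, then use the two-subspace decomposition to get $\lambda_{\max}(P_1+P_2)=1+\cos\theta$). As you observe, the argument actually yields $2v\sin^2(\theta/2)$, which is the form stated in the textbook; the paper's version drops the harmless factor of~$2$.
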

For us, $A_1 = H_V^L + (a-1)V_1$ and $A_2 = H_L$.  The null space $L_1$ of $A_1$ consists of superpositions of states with $a$ $\ket{x}$ states in the left ring with no two of those $\ket{x}$ states are adjacent.  The smallest nonzero eigenvalue of $A_1$ is at least $V_2$. (It could potentially be higher than $V_2$, if $a$ is large enough so that all configurations contain multiple adjacent $\ket{xx}$ pairs in the left ring.)  The null space $L_2$ of $A_2$ consists of uniform superpositions over all possible configurations of $a$ $\ket{x}$ sites on the left ring, with some state for the remaining qubit sites, shuffled around appropriately for different $\ket{x}$ configurations.

To find the lowest nonzero eigenvalue of $A_2$, we can recognize $H_L$ as the spin-$1/2$ ferromagnetic Heisenberg model on a chain of length $N$ with open boundary conditions by mapping $\ket{x}$ to spin up and a qubit state to spin down.  In this case, we now consider the $\ket{x}$ states to represent particles hopping around on the chain.  Unlike the usual ferromagnetic Heisenberg model, the background states with no particles have some structure, since the background states are actually qubits.  However, the qubits are inert and cannot hop or otherwise interact by themselves; they can only switch places with a $\ket{x}$ particle as it hops.  Therefore, the Hilbert space breaks up into invariant subspaces within which the state of the background sites is determined completely by the locations of the $\ket{x}$ particle states.  Within a single such subspace, we have exactly the usual spin-$1/2$ ferromagnetic Heisenberg model, which is known to have gap $1 - \cos(\pi/N) = \Theta(1/N^2)$~\cite{Heisenberg}.  (This gap considers all possible numbers of $\ket{x}$ states; in fact, the minimum gap is achieved when there is just one $\ket{x}$.)

The angle $\theta$ between $L_1$ and $L_2$ can be determined by simply noting that there are $\binom{N}{a}$ total possible $\ket{x}$ configurations, but at most \begin{equation}
\frac{N(N-2)(N-4)(N-6) \cdots (N-2a+2)}{a!} \leq \binom{N}{a} \left(1 - \frac{1}{N-1} \right)
\end{equation}
configurations with no adjacent $\ket{xx}$ pairs.  This formula comes from choosing the location of $a$ $\ket{x}$ sites, with each one chosen from possibilities excluding the sites of previously chosen $\ket{x}$ sites as well as the next higher-numbered site to any previously chosen site; then we divide by $a!$ to eliminate the redundancy due to choosing the sites in order.  This of course is an overestimate, as we must also exclude the next lower-numbered site to any previously chosen site, but then the counting becomes complicated, as the site below one previously chosen site can be the same as the site above a previously chosen site.  Since an element of $L_2$ consists of a superposition over all $\binom{N}{a}$ configurations and an element of $L_1$ can only contain configurations with no adjacent pairs, it follows that $\sin^2 \theta \geq \frac{1}{N-1}$.

Putting these numbers into lemma~\ref{lemma:Kitaev}, we find that on the subspace with $a>1$, $H_L + H_V^L + (a-1)V_1$ has all eigenvalues $\Omega(1/N^3)$.  If we choose $V_1 = \Theta(1/N^4)$ and $V_2 = 1$, then $H_L + H_V^L$ also has all eigenvalues $\Omega(1/N^3)$ on the subspace with $a>1$, while on the sector with $a=0$, $H_L$ will have eigenvalue $\Theta(1/N^4)$.  Putting everything together, we find that the ground state is unique, and is the desired ground state $\ket{g}$ (with energy $0$), while the gap in the whole Hilbert space is $\Omega(1/N^4 \log N)$.

\subsection{Entanglement of the Ground State}

The ground state of $H$ is thus
\begin{equation}
\ket{g} = \frac{1}{N} \sum_{a,b = 1}^{N} M_{a,b} \ket{xx}_{N}^{LR} \bigotimes_{i=1}^{N-1} \ket{\Psi^-}_i^{LR},
\end{equation}
as noted before.  Suppose we make a local measurement to each ring that measures the location of the $\ket{x}$ state but nothing else about the qubit states.  The outcomes are uniformly distributed among all possible pairs $(a,b)$, and given a specific outcome, the residual state is
\begin{equation}
M_{a,b} \ket{xx}_{N}^{LR} \bigotimes_{i=1}^{N-1} \ket{\Psi^-}_i^{LR}.
\end{equation}
$M_{a,b}$ is just a tensor product of a local unitary operator $M_a$ on the left ring with a local unitary operator $M_b$ on the right ring. By performing $M_a^\dagger$ and $M_b^\dagger$, we can therefore convert the state to $\ket{\Psi^-}^{\otimes (N-1)}$.  Since purely local operations convert $\ket{g}$ into $N-1$ EPR pairs, the original state has at least $N-1$ ebits of entanglement.  That is, $S(\rho_L) = S(\rho_R) \geq N-1$.  Since one ring has only $N$ $3$-state particles, the maximum possible entropy would be $N \log_2 3$, so we find that $S(\mbox{one ring}) = \Theta(N)$.  If we rephrase this in terms of the energy gap, $\Delta = \Omega(1/N^4 \log N)$, we see that $S(\mbox{one ring}) = \Omega((-\Delta \log \Delta)^{-1/4})$, as claimed.

\section{Discussion}

The explicit construction in this paper shows that it is possible to have the entanglement entropy scale polynomially with the inverse gap.  On the other hand, it was proven previously that the entanglement entropy cannot scale faster than exponentially in the inverse gap.  A tight bound remains open.

One route to a tight bound might be to improve the Lieb-Robinson bound.
This bound is a key ingredient in proving upper bounds on entanglement entropy, but our construction suggests that the Lieb-Robinson bound does not fully capture the locality properties of Hamiltonian dynamics.
Let us explain by an example: consider a system of free fermions on an infinite line in one dimension, with Hamiltonian $\sum_i \Psi_{i+1}^{\dagger} \Psi_i + h.c.$ Consider the operator equations of motion for the operator $\Psi^{\dagger}_0$.  The Lieb-Robinson bound says that $\exp[i H t] \Psi^{\dagger}_0 \exp[-i H t]$ can be well-approximated by an operator supported just on the set of sites within distance $v_{LR} t$ of $0$, where $v_{LR}$ is the Lieb-Robinson velocity.  However, at least for this system, we can make a sharper statement --- there is no good approximation by an operator supported on too {\em few} sites.  The exact
result for this operator is:
\begin{equation}
\exp[i H t]\Psi^{\dagger}_0 \exp[-i Ht]=\sum_x A(x,t) \Psi^{\dagger}_x,
\end{equation}
where
\begin{equation}
A(x,t)\equiv \int \frac{{\rm d}k}{2\pi} \exp[2i\cos(k t)] \exp(i kx).
\end{equation}
One can check from the asymptotics of the Bessel function that for $x>>2t$ the
amplitude $A(x,t)$ is exponentially small, in agreement with the Lieb-Robinson bound.  However, the
amplitude $A(x,t)$ is symmetric under a change in sign from $x \rightarrow -x$.
That is,
the operator $ \Psi^{\dagger}_0$ creates both left and right moving excitations.
Further, while the amplitude is maximum at $x$ close to $\pm 2t$, there is a
non-vanishing spread in the width of $|A(x,t)|^2$: the particle is not perfectly
localized at $x=\pm 2t$, but instead has a spread in its position, indicating
an uncertainty in the initial velocity of the particle.
  It is possible to create a wavepacket that moves to the right, but in order to do this we need an operator that creates the particle in a superposition of different
sites: $\sum_i a(i) \Psi^{\dagger}_i$ for some function $a(i)$.  If $a(i)$ has finite support, then there will always be some spread in the velocity of the wavepacket, and in fact there will be some left-moving component to the wavepacket.
Thus, it will not be possible to find, at least in this model, a choice of
$a(i)$ which has finite support for which the time evolved operator
$\exp[i H t] \Bigl(\sum_i a(i) \Psi^{\dagger}_i \Bigr) \exp[-i H t]$ is approximately localized on a set of sites
with a bounded, time-independent diameter.
The Lieb-Robinson bound does not capture this effect, but it seems to be a real effect for every Hamiltonian that we can think of: operators with finite support create both left- and right-moving excitations.

A similar phenomenon afflicts our construction.  We had to go to some effort to keep the two rings synchronized because of the spread in the velocity of the $|x\rangle$, which, if left unchecked, could wash out the entanglement between the two rings.
To do this, we only allowed
the $|x\rangle$s to hop from site $1$ to site $N$ in pairs.
The fact that the term $H_B$ only
has effect when both $|x\rangle$s are in the correct position is one of the reasons that the gap is so small.
However, if we had a way of just creating an $|x\rangle$ excitation that moved to the right around the ring at a constant velocity equal to $v_{LR}$, it might be easier to synchronize the rings, and thus to increase the gap for the same entanglement.  Hopefully, then, this work will lead to a better Lieb-Robinson bound that captures these effects, and this in turn will lead to stronger bounds on entanglement in Hamiltonian systems.

\subsection*{Acknowledgements}  We thank D. Aharonov for many useful
discussions throughout this work.
DG was supported by CIFAR, by the Government of Canada through NSERC, and by the Province of Ontario
through MRI.  MBH was supported by U.~S.\ DOE Contract No. DE-AC52-06NA25396.


\begin{thebibliography}{00}

\bibitem{arealaw} J.~Eisert, M.~Cramer, and M.~B.~Plenio, ``Area laws for the entanglement entropy - a review,'' arXiv:0808.3773v1 [quant-ph].

\bibitem{Hastings-area-law} M.~B.~Hastings, ``An Area Law for One Dimensional Quantum Systems,'' JSTAT, P08024 (2007), arXiv:0705.2024v2 [quant-ph].

\bibitem{expanders} M.~B.~Hastings, Phys. Rev. B {\bf 76}, 035114 (2007).

\bibitem{expanders2} A.~Ben-Aroya and A.~Ta-Shma, arXiv:quant-ph/0702129.

\bibitem{2dboson} M.~B.~Plenio, J.~Eisert, J.~Dreissig, and M.~Cramer, Phys. Rev. Lett. {\bf 94}, 060503 (2005).

\bibitem{renyiapprox} F. Verstraete and J.~I.~Cirac, Phys. Rev. B {\bf 73}, 094423 (2006).

\bibitem{dmrg} S.~R.~White, Phys. Rev. Lett. {\bf 69}, 2863 (1992).

\bibitem{tebd} G.~Vidal, Phys. Rev. Lett. {\bf 93}, 040502 (2004).

\bibitem{LRbound} E.~H.~Lieb and D.~W.~Robinson, Commun.\ Math.\ Phys.\ {\bf 28}, 251 (1972).

\bibitem{locality} M.~B.~Hastings, Phys. Rev. B {\bf 69}, 104431 (2004).

\bibitem{lr2} M.~B.~Hastings and T.~Koma, Commun. Math. Phys. {\bf 265}, 781 (2006).

\bibitem{lr3} B.~Nachtergaele and R.~Sims, Commun. Math. Phys. {\bf 265}, 119 (2006).

\bibitem{cftent} P.~Calabrese and J.~Cardy, J. Stat. Mech., P06002 (2004).

\bibitem{cftent2} H.~Casini and M.~Huerta, J. Stat. Mech., P12012 (2005).

\bibitem{QMA1D} D.~Aharonov, D.~Gottesman, S.~Irani, and J.~Kempe, ``The power of quantum systems on a line,'' Proc. 48th IEEE Symposium on the Foundations of Computer Science (FOCS), 373--383 (2007), arXiv:0705.4077 [quant-ph].

\bibitem{irani} S.~Irani, ``Ground States Entanglement in One Dimensional Translationally
Invariant Quantum Systems,'' arXiv:0901.1107 [quant-ph].

\bibitem{Kitaevbook} A.Y. Kitaev, A.H. Shen, and M.N. Vyalyi, {\it Classical and Quantum Computation}, AMS, Providence, RI (2002).

\bibitem{Heisenberg} T.~Koma and B.~Nachtergaele, ``The spectral gap of the ferromagnetic $XXZ$ chain,''
Lett.\ Math.\ Phys.\ {\bf 40}, no.~1, 1--16 (1997).
\end{thebibliography}
\end{document}